\documentclass{article}

\usepackage[preprint]{neurips_2026}

\usepackage[utf8]{inputenc} % allow utf-8 input
\usepackage[T1]{fontenc}    % use 8-bit T1 fonts
\usepackage{hyperref}       % hyperlinks
\usepackage{url}            % simple URL typesetting
\usepackage{booktabs}       % professional-quality tables
\usepackage{amsfonts}       % blackboard math symbols
\usepackage{nicefrac}       % compact symbols for 1/2, etc.
\usepackage{microtype}      % microtypography
\usepackage{xcolor}         % colors
\usepackage{amsmath}
\usepackage{amsthm}
\usepackage{thmtools}
\usepackage{bm}
\usepackage{wrapfig}
\usepackage{graphicx}
\usepackage{xcolor}
\usepackage{stfloats} % For figure* positioning
\usepackage{multirow}
\usepackage[table]{xcolor}
\usepackage{subcaption}
\usepackage{enumitem}

\definecolor{lightgray}{gray}{0.92}  % 定义浅灰色

\newcommand{\name}{\text{LERA}}

\newtheorem{theorem}{Theorem}[section]

\theoremstyle{definition}

\theoremstyle{remark}
\newtheorem{remark}[theorem]{Remark}

\usepackage{booktabs}
\usepackage{tabularx}
\usepackage{array}
\usepackage{xcolor}
\usepackage[most]{tcolorbox}
\usepackage{enumitem}
\usepackage{amssymb}

\newcolumntype{Y}{>{\raggedright\arraybackslash}X}

\newtcolorbox{querybox}[2][]{
    enhanced,
    breakable,
    colback=gray!4,
    colframe=gray!45,
    boxrule=0.5pt,
    arc=1.5pt,
    left=6pt,
    right=6pt,
    top=5pt,
    bottom=5pt,
    title={#2},
    fonttitle=\bfseries,
    coltitle=black,
    colbacktitle=gray!12,
    #1
}

\newtcolorbox{promptbox}[2][]{
    enhanced,
    breakable,
    colback=gray!3,
    colframe=gray!55,
    boxrule=0.5pt,
    arc=1.5pt,
    left=7pt,
    right=7pt,
    top=6pt,
    bottom=6pt,
    title={#2},
    fonttitle=\bfseries,
    coltitle=black,
    colbacktitle=gray!15,
    #1
}

\newtcolorbox{responsebox}[2][]{
    enhanced,
    breakable,
    colback=gray!2,
    colframe=gray!35,
    boxrule=0.4pt,
    arc=1pt,
    left=6pt,
    right=6pt,
    top=5pt,
    bottom=5pt,
    title={#2},
    fonttitle=\itshape,
    coltitle=black,
    colbacktitle=gray!10,
    #1
}

\title{LERA: LLM-Enhanced RAG for Ad Auction in Generative Chatbots}

\author{%
  \textbf{Haoran Sun\textsuperscript{1},
  Xinrui Song\textsuperscript{3},
  Xinyu Zhang\textsuperscript{2},
  Zhaohua Chen\textsuperscript{1}}, \\
  \textbf{Xu Chu\textsuperscript{1},
  Zhilin Zhang\textsuperscript{2},
  Chuan Yu\textsuperscript{2},
  Jian Xu\textsuperscript{2},
  Bo Zheng\textsuperscript{2},
  Xiaotie Deng\textsuperscript{1}} \\
  \textsuperscript{1}Peking University
  \textsuperscript{2}Alibaba Group
  \textsuperscript{3}Shandong University \\
  \texttt{sunhaoran0301@stu.pku.edu.cn} \\
  \texttt{\{chenzhaohua, chu\_xu, xiaotie\}@pku.edu.cn} \\
  \texttt{\{zxy479640, zhangzhilin.pt, yuchuan.yc, xiyu.xj, bozheng\}@alibaba-inc.com}
}

\begin{document}

\maketitle

\begin{abstract}
The integration of advertising auction mechanisms into large language model (LLM)-based chatbots presents a significant opportunity for commercialization, yet poses unique challenges in balancing relevance, efficiency, and user experience. Recently, Feizi et al.~\citep{feizi2023online} and Hajiaghayi et al.~\citep{hajiaghayi2024ad} outlined a retrieve-then-generate paradigm that decouples retrieval and generation, offering lightweight ad insertion and payment determination. However, current retrieval relies solely on text embedding similarity, which may lead to commercial misinterpretation and issues such as repetitive insertions. In this paper, we propose LERA, a two-stage retrieve-then-generate auction framework tailored for LLM chatbots. In the first stage, embedding-based coarse filtering pre-selects a small set of candidate advertisers. In the second stage, the LLM itself is queried with a carefully designed prompt to produce logits over candidates, which serve as refined organic relevance scores. These scores are combined with bids, and a critical-value payment rule accounts for both the coarse-filtering and fine-ranking thresholds, ensuring truthfulness for utility-maximizing advertisers. The framework naturally extends to multiple ad insertions within dynamic dialogue flows and long responses. Experiments on a synthetic advertiser-query benchmark show that LERA substantially improves ad selection accuracy and insertion diversity while incurring only controllable latency overhead. 
\end{abstract}
\section{Introduction}\label{sec:intro}

With the rapid progress of foundation models, large language models (LLMs) are increasingly serving as conversational gateways through which users access information, products, and services~\citep{team2023gemini,qwen3technicalreport,deepseekai2025deepseekv32}.
This shift creates new opportunities for monetizing generative systems beyond subscription and token-based pricing.
A central challenge is how to integrate advertising into LLM-based chatbots in a way that is contextually appropriate for users, commercially meaningful for advertisers, and compatible with auction-based pricing~\citep{feizi2023online,tang2024genai}.
Unlike sponsored search or display advertising, where ads are placed in predefined slots, chatbot advertising must decide whether and how an advertisement should be woven into a generated response.
This makes the auction problem inseparable from language understanding and generation.

A recent line of work studies advertising auctions for generative AI systems, which can be broadly categorized into two design paradigms: \textit{retrieve-then-generate} and \textit{generate-then-retrieve}. 
Retrieve-then-generate methods, such as the RAG-based auction of Hajiaghayi et al.~\citep{hajiaghayi2024ad}, perform lightweight retrieval and insert ads before generation, offering high efficiency but relying on embedding similarity that may not fully capture commercial intent or avoid repetitive insertions. 
Generate-then-retrieve approaches leverage the LLM's contextual understanding to produce response-level signals or advertiser-specific distributions~\citep{feizi2023online,mordo2024sponsored,soumalias2024truthful,duetting2024mechanism}, but they often require multiple LLM calls per query or advertiser-specific scoring, incurring significant latency and inference cost. 
Placeholder-based methods instead generate responses with designated ad slots and fill them afterward~\citep{balseiro2025position,xu2026ad}, potentially compromising fluency because off-the-shelf LLMs are not generally trained to produce natural responses around explicit placeholders.

At the same time, directly using LLMs as ad selectors is not by itself a complete solution. 
Although LLMs have strong contextual understanding and have been used for ranking and recommendation~\citep{yang2023against,li2024leadre,reisenbichler2025applying}, incorporating LLM judgments into an auction requires a payment rule that remains compatible with strategic bidding. 
Moreover, scoring all advertisers with LLM logits can be prohibitively expensive under practical high-throughput inference engines.

These trade-offs motivate \name, a two-stage retrieve-then-generate auction framework for ad insertion in LLM-based chatbots. 
The central idea is to preserve the efficiency of retrieval-based allocation while using the LLM's contextual judgment only where it is most valuable: on a small candidate set. 
As is shown in Figure~\ref{fig:framework}, in the first stage, \name\ applies embedding-based coarse filtering to preselect candidate advertisers from a large pool. 
In the second stage, it presents these candidates, together with a no-insertion option, to the LLM and extracts logits over candidate labels as refined organic relevance scores. These scores are combined with bids to determine the winner, and a critical-value payment rule accounts for both the coarse-filtering and fine-ranking thresholds. 
Thus, although using LLM logits for ranking is common in broader retrieval and recommendation settings, \name\ makes such scoring practical for real-time chatbot ad auctions by embedding it inside a throughput-aware two-stage mechanism. 
Crucially, in high-throughput inference engines like SGLang, logit extraction relies on parallel pre-filling across all candidates. By restricting this computation to only the small set of first-stage winners, \name\ significantly improves throughput while preserving accuracy.
Experiments on single-ad and multi-ad insertion tasks show that \name\ improves ad selection quality and dialogue-level satisfaction with only controllable latency overhead.

Our contributions are summarized as follows:
    \begin{itemize}
    \item We propose \name, a context-aware ad auction framework for LLM chatbots that combines embedding-based coarse filtering with LLM-logit-based fine ranking.
    \item We design a critical-value payment rule tailored to the two-stage allocation process, accounting for both candidate-set entry and final-stage competition.
    \item Qualitative and quantitative experiments show that \name\ better captures nuanced commercial intent, improves insertion diversity, and achieves a favorable accuracy--efficiency trade-off compared with embedding-only and LLM-only alternatives.
\end{itemize}

The remainder of this paper is organized as follows. 
Section~\ref{sec:related_work} provides a concise overview of related work, with a more detailed discussion deferred to Appendix~\ref{sec:related}. 
Section~\ref{sec:pre} introduces the auction setting and the decoupled allocation paradigms for LLM-based advertising. 
Section~\ref{sec:methods} presents the proposed two-stage scoring mechanism, payment rule, and multi-ad extension. 
We conduct extensive experiments to evaluate \name\ against different baselines in Section~\ref{sec:experiment} and conclude in Section~\ref{sec:conclusion}.

\begin{figure*}[t]
    \centering
    \includegraphics[width=0.95\textwidth, trim= 30 30 30 30, clip]{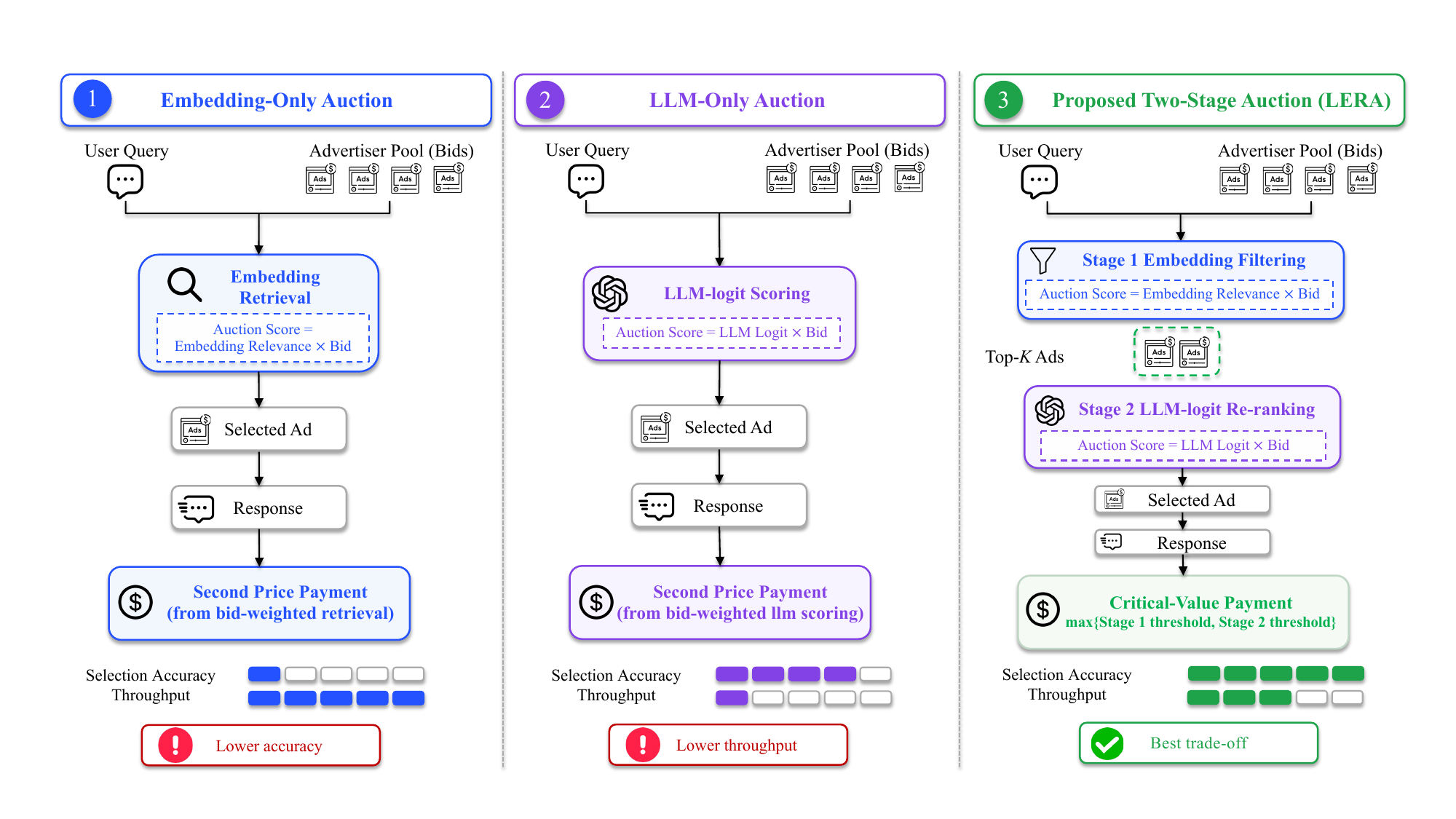}
    \caption{Comparison of ad selection strategies for LLM-based chatbots. Each panel shows a different approach: \textbf{Embedding-only} uses only embedding similarity weighted by bids; \textbf{LLM-Only} scores all advertisers using LLM logits combined with bids; \textbf{\name{}} first filters candidates with embedding similarity and then applies LLM-logit scoring on the top-K set. Payment annotations indicate the auction mechanism used in each method, and the bars illustrate relative selection accuracy and system throughput.}
    \label{fig:framework}
\end{figure*}
\section{Related Work}\label{sec:related_work}

Our framework lies at the intersection of LLM-mediated advertising, retrieval-augmented generation, LLM-based recommendation, and externality-aware auction design. We briefly review these areas and highlight our differences; an extended discussion is provided in Appendix~\ref{sec:related}.

\textbf{Ad Auctions in LLMs.}
Existing frameworks can generally be categorized into generate-then-retrieve and retrieve-then-generate paradigms. Generate-then-retrieve methods auction the right to speak or insert ads based on generated probabilities or responses~\citep{duetting2024mechanism,mordo2024sponsored,soumalias2024truthful}, often requiring multiple inference passes. Retrieve-then-generate methods, such as Hajiaghayi et al.~\cite{hajiaghayi2024ad}, treat advertisers as retrievable documents and are more efficient, but they rely on embedding similarity that may miss nuanced commercial intent. Training-based approaches~\citep{zhao2025llm} embed bid monotonicity directly into the LLM, but are costly to retrain. Our work stays within the efficient retrieve-then-generate paradigm and introduces a modular two-stage scoring that uses LLM judgments only on a shortlist, balancing accuracy and throughput.

\textbf{Retrieval-Augmented Generation.}
Standard RAG pipelines maximize semantic fidelity to a query~\citep{lewis2020retrieval,gao2023retrieval} and have been adapted for recommendation~\citep{csakar2025maximizing}. Ad insertion, however, demands more than topical similarity: it must respect commercial intent, avoid repetitive insertions across a dialogue, and preserve the coherence of the final response. Our two-stage process adapts RAG to this competitive and revenue-driven setting by augmenting dense retrieval with an LLM-based fine-ranking stage.

\textbf{LLM-based Recommendation.}
LLMs have shown strong context understanding for zero-shot ranking and recommendation~\citep{zhao2024recommender,li2024large}, often outperforming embedding-based methods in capturing nuanced user intent~\citep{liu2023chatgpt,dai2023uncovering}. However, these works focus solely on relevance or user satisfaction, without an economic layer. We integrate the LLM's relevance assessment as an organic compatibility score into a truthful auction, accommodating strategic bidding and incentive-compatible pricing.

\textbf{Externality in Ad Auctions.}
From classic position auctions~\citep{edelman2007internet,varian2007position} to deep learning-based mechanisms~\citep{zhang2021optimizing,liu2021neural} and whole-slate optimization~\citep{li2023learning,zhu2025contextual}, auction design has progressively addressed allocative externalities among ads. LLM-generated responses introduce a distinct generative externality: an inserted ad can alter the flow, tone, specificity, and length of the entire narrative. Our payment rule explicitly accounts for the dynamic organic probability of an ad within a generated response, extending externality-aware design to generative AI.

\section{Preliminary}\label{sec:pre}

We consider a practical scenario for single ad insertion. Given a user query $x$, there exists a candidate set of $N$ advertisers bidding for the opportunity to display their product within the response. 
Each advertiser $i$ has a \emph{public} description $d_i$ for its product and estimates a prior value-per-click $v_i$ based on the query, representing the expected value of a click should their ad be inserted. 
This value $v_i$ constitutes \emph{private} information; the platform observes only the reported bid $b_i$ and the public $d_i$ from each advertiser.

Formally, the input to the ad insertion mechanism comprises the query $x$, the set of ad descriptions $\{d_i\}_{i=1}^N$, and the reported bids $\{b_i\}_{i=1}^N$. The mechanism outputs a distribution over the allocation space (the response $y$) and a payment vector (the cost per click for each advertiser).

Let $\mathcal{Y}$ denote the space of possible responses. The mechanism is formally defined by a pair of functions $(g, p)$:
\begin{equation*}
    g: \mathcal{X} \times \mathcal{D}^N \times \mathbb{R}^N \to \Delta(\mathcal{Y}), \quad
    p: \mathcal{X} \times \mathcal{D}^N \times \mathbb{R}^N \to \mathbb{R}^N,
\end{equation*}
where $g$ is the allocation rule and $p$ is the payment rule. The final response is sampled from the distribution $g(x, \bm{d}, \bm{b})$. Upon the generation of response $y \sim g(x, \bm{d}, \bm{b})$, each advertiser forms a posterior value-per-click $\hat{v}_i(y)$, which may differ from the prior estimate $v_i$ derived solely from the query. The advertiser's utility $u_i$ is realized upon a user's click outcome $o_i \in \{0,1\}$, such that $u_i = (\hat{v}_i(y) - p_i) \cdot o_i$.

The allocation function $g$ inevitably relies on a foundation model for text generation. Although LLMs exhibit remarkable instruction-following capabilities~\cite{ouyang2022training}, expecting an off-the-shelf model to strictly adhere to scalar bids during allocation proves challenging, as LLMs often lack numerical sensitivity~\cite{lewkowycz2022solving, qian2023limitations}. While existing decoupled approaches compute feature scores using simple embedding similarity to guide the allocation~\cite{hajiaghayi2024ad}, such coarse features often fail to maximize user satisfaction and platform revenue. Therefore, we propose a two-stage score calculation method that leverages the LLM's own capabilities to better handle nuanced user intent, which we detail in the next section.

\subsection{Retrieval-Generation Decoupling}

A representative approach, proposed by Hajiaghayi et al.~\cite{hajiaghayi2024ad}, decomposes the allocation rule as $g = g_{\text{retrieval}} \odot g_{\text{generate}}$. For a single ad insertion, a winning advertiser $i^*$ is first determined by
\begin{equation*}
    i^* = \arg\max_{i} \big( s(x, d_i) \cdot b_i \big),
\end{equation*}
where the score $s(x, d_i)$ is computed via embedding similarity with optional noise:
\begin{equation*}
    s(x, d_i) = \text{Norm}(\text{Similarity}(\text{Emb}(x), \text{Emb}(d_i))) \cdot \text{Noise}_i,
\end{equation*}
and $\text{Norm}(\cdot)$ maps the raw similarity (e.g., cosine similarity in $[-1,1]$) to a strictly positive value, e.g., via min-max scaling to $(0,1]$, to avoid non‑positive bid multipliers. 
A predefined prompt template then incorporates the winning ad to generate the final response
$
    y \sim \text{GenAI}\big(\text{Prompt}(x, d_{i^*})\big).
$
The baseline second-price payment rule is then
\begin{equation*}
    p_i = \max_{j \neq i^*} \frac{s(x, d_j) \cdot b_j}{s(x, d_i)}.
\end{equation*}
Dubey et al.~\cite{dubey2024auctions} explore a related idea where the retrieval stage predicts a prominence vector over ads, but this requires the LLM to be sensitive to such a vector, which is difficult in practice.

This decoupled design requires only one expensive LLM call (the final generation), keeping the system lightweight. However, it places high demands on the accuracy of $s$: a poor score can make ad insertion counterproductive. To address this, we propose a two-stage scoring method that leverages the LLM's own judgment for more nuanced intent handling and diversity, while retaining the single-call full-text generation property. The next section details our mechanism with refined scoring and a critical-value payment rule.

\section{Two-Stage Scoring for Ad Auction}\label{sec:methods}

We first present the proposed mechanism for single ad insertion. 
The key idea is to retain the efficiency of retrieve-then-generate allocation while improving the relevance score used in the auction. 
To this end, the mechanism uses a two-stage ranking process, as is illudstrated in Figure~\ref{fig:framework}.
The first stage applies inexpensive embedding-based filtering to reduce the advertiser pool from $N$ to a small candidate set of size $K$, where $K \ll N$. 
The second stage uses the LLM's contextual judgment to refine the relevance scores only within this candidate set. 
The refined scores are then combined with advertisers' bids to determine the final allocation. 
We further design a critical-value payment rule for this two-stage selection process and discuss its extension to multiple ad insertions.

\subsection{Two-Stage Allocation for Single Ad Insertion}\label{sec:methods-allocation}

Given a user query $x$, advertiser descriptions $\{d_i\}_{i=1}^N$, and bids $\{b_i\}_{i=1}^N$, the first stage aims to efficiently identify a small set of plausible advertisers. 
Rather than directly embedding the raw query, we first prompt an LLM to generate a concise set of intent keywords, denoted by $\mathbf{w} \sim \text{GenAI}(\text{Prompt}_{\text{keywords}}(x))$. 
These keywords summarize the user's underlying intent and provide a cleaner signal for embedding-based retrieval.

For each advertiser $i$, we compute a Stage 1 relevance score using the normalized cosine similarity between the keyword embedding and the advertiser description embedding:
\begin{equation*}
    s_i^{(1)} = 
    \text{Norm}\big(
    \text{Similarity}(\text{Emb}(\mathbf{w}), \text{Emb}(d_i))
    \big).
\end{equation*}
The relevance score is then weighted by the advertiser's bid as $\text{Score}_i^{(1)} = s_i^{(1)} \cdot b_i$. 
The top-$K$ advertisers according to $\text{Score}_i^{(1)}$ form the candidate set:
\begin{equation*}
    \mathcal{S} = 
    \arg\underset{i \in \{1,\ldots,N\}}{\text{Top-}K} \; 
    \text{Score}_i^{(1)}.
\end{equation*}

The second stage refines the relevance assessment within $\mathcal{S}$ using the LLM itself. 
This stage is motivated by the fact that ad insertion often requires understanding implicit constraints, negative preferences, and intent shifts that may not be captured by embedding similarity alone. 
We present the original query $x$ and the candidate descriptions $\{d_j\}_{j \in \mathcal{S}}$ to the LLM and ask it to identify the most appropriate ad for insertion. 
To avoid forcing an irrelevant advertisement into the response, we include a no-insertion option $\phi$ and define the extended candidate set as $\hat{\mathcal{S}} = \mathcal{S} \cup \{\phi\}$.

Instead of relying on generated textual output, we extract the model's logits for the candidate labels. 
Specifically, the LLM produces a $(K+1)$-dimensional logit vector $\mathbf{z}$ from $\text{Prompt}_{\text{ranking}}(x,\{d_j\}_{j \in \hat{\mathcal{S}}})$. 
We normalize these logits with a softmax function to obtain the Stage 2 relevance score:
\begin{equation*}
    s_j^{(2)} = 
    \frac{\exp(z_j)}
    {\sum_{k \in \hat{\mathcal{S}}} \exp(z_k)}.
\end{equation*}
This score can be interpreted as the LLM's refined estimate of organic relevance within the candidate set. 
The final auction score is $\text{Score}_j^{(2)} = s_j^{(2)} \cdot b_j$, where we set $b_{\phi}=1$ for the no-insertion option. 
The winner is then selected as
\begin{equation*}
    i^* = \arg\max_{j \in \hat{\mathcal{S}}} \; \text{Score}_j^{(2)}.
\end{equation*}
If $i^*=\phi$, no advertisement is inserted. 
Otherwise, the winning advertiser's description is incorporated into the final generation prompt:
\begin{equation*}
    y \sim 
    \begin{cases}
    \text{GenAI}\big(\text{Prompt}(x, d_{i^*})\big), & \text{if } i^* \neq \phi,\\
    \text{GenAI}\big(\text{Prompt}(x)\big), & \text{if } i^* = \phi.
    \end{cases}
\end{equation*}

\subsection{Critical-Value Payment and Incentive Compatibility}\label{sec:methods-payment}

We next define the payment rule for the two-stage allocation process. 
Following critical-value pricing~\cite{riley1981optimal,klemperer1999auction}, the winning advertiser should pay the minimum bid required to remain the winner.

If no advertiser is selected, i.e., $i^*=\phi$, no payment is charged. 
Otherwise, let $i^*$ be the winning advertiser. 
Let $\text{Score}^{(1)}_{(K+1)}$ denote the $(K+1)$-th highest Stage 1 score, which is the threshold for entering the candidate set. 
Let $\text{Score}^{(2)}_{(2)}$ denote the second-highest Stage 2 score within $\hat{\mathcal{S}}$, which is the threshold for winning the final ranking. 
The payment of advertiser $i^*$ is defined as
\begin{equation*}
    p_{i^*} = \max \Bigg\{
    \underbrace{
    \frac{\text{Score}_{(K+1)}^{(1)}}{s^{(1)}_{i^*}}
    }_{\text{critical value for Stage 1}},
    \underbrace{
    \frac{\text{Score}_{(2)}^{(2)}}{s^{(2)}_{i^*}}
    }_{\text{critical value for Stage 2}}
    \Bigg\}.
\end{equation*}
Intuitively, the winner must pay enough to both enter the Stage 1 top-$K$ set and defeat all competitors in Stage 2. 
Taking the maximum of the two thresholds ensures that both requirements are satisfied.

We now state the incentive-compatibility property of the mechanism. 
As in Section~\ref{sec:pre}, advertiser $i$ has a prior value-per-click estimate $v_i$, while the realized value $\hat{v}_i(y)$ depends on the generated response. 
Let $o_i(y)$ denote the click outcome or click probability associated with response $y$. 
The expected utility of the winning advertiser is
$
    u_i = 
    \mathbb{E}_{y \sim \text{GenAI}(\text{Prompt}(x, d_i))}
    \big[
    (\hat{v}_i(y)-p_i)\cdot o_i(y)
    \big].
$

\begin{restatable}{proposition}{truthfulProp}\label{thm:truthful}
    Assume that the relevance scores used by the allocation rule are independent of advertiser $i$'s reported bid except through the bid-weighted ranking rule. 
    Further assume that $o_i(y)$ and $\hat{v}_i(y)$ are independent conditioned on the ad being displayed, and that the prior estimate is unbiased, i.e., $v_i=\mathbb{E}_y[\hat{v}_i(y)]$. 
    Then the two-stage auction mechanism with the payment rule above is incentive compatible for a utility-maximizing advertiser.
\end{restatable}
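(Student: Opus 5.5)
The plan is to reduce the statement to the standard Myerson-style characterization for single-parameter environments. We show that (i) the allocation is monotone in $b_i$, (ii) the winner pays exactly the critical (threshold) bid, and (iii) the advertiser's expected utility factors as $(v_i - p_i)\cdot \Pr[\text{click}]$. Together these make truthful bidding a dominant strategy.

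First, fix the query $x$, the keywords $\mathbf{w}$, the descriptions, and all other bids $\bm{b}_{-i}$. By the first assumption, $s_i^{(1)}$ and the Stage 2 logits (hence $s_j^{(2)}$ for all $j\in\hat{\mathcal{S}}$) do not depend on $b_i$. A caveat is that the composition of $\mathcal{S}$ may change with $b_i$. However, once $i$ is in $\mathcal{S}$, raising $b_i$ further cannot change the set, because only $i$'s Stage 1 score moves and $i$ stays inside it. So we may treat the Stage 2 scores as fixed on the region where $i\in\mathcal{S}$.

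Next, define two thresholds:
\[
\tau_1=\mathrm{Score}^{(1)}_{(K+1)}/s_i^{(1)}, \qquad \tau_2=\max_{j\in\hat{\mathcal{S}}\setminus\{i\}}\mathrm{Score}^{(2)}_j/s_i^{(2)}.
\]
Here $\mathrm{Score}^{(1)}_{(K+1)}$ is taken as the $K$-th highest among the others, which coincides with the $(K+1)$-th overall when $i$ is in the top $K$. Advertiser $i$ wins iff $b_i>\tau_1$ and $b_i>\tau_2$, with ties broken consistently. Hence $i$ wins iff $b_i>\tau:=\max\{\tau_1,\tau_2\}$, which is a monotone step allocation. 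Note that $\tau_2$ includes the no-insertion option via $b_\phi=1$. When $i$ wins, $\mathrm{Score}^{(2)}_{(2)}$ equals the best competitor score, so the payment in the rule above is exactly $\tau$, and $\tau$ is independent of $b_i$.

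Then compute utility. Conditional on winning, the response is $y\sim\mathrm{GenAI}(\mathrm{Prompt}(x,d_i))$, which does not depend on $b_i$ or $p_i$. By the conditional independence of $o_i$ and $\hat v_i$, together with unbiasedness,
\[
u_i=\mathbb{E}[\hat v_i(y)o_i(y)]-p_i\mathbb{E}[o_i(y)]=(v_i-\tau)\,\mathbb{E}[o_i(y)].
\]
The click term $\mathbb{E}[o_i(y)]\ge 0$ is independent of the bid. Losing yields $0$. Therefore:
\begin{itemize}
\item If $v_i>\tau$, bidding $v_i$ wins with nonnegative utility, and no bid does better, since any winning bid gives the same utility.
\item If $v_i<\tau$, bidding $v_i$ loses with utility $0$, while any winning bid gives negative utility.
\item If $v_i=\tau$, the advertiser is indifferent.
\end{itemize}
So truthful bidding is dominant.

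The main obstacle is the first step: verifying that the Stage 2 environment is genuinely invariant to $b_i$ once $i$ has entered $\mathcal{S}$. This requires that the other members of the top-$K$ are fixed, and that the LLM logits (and softmax normalization) depend only on the set $\hat{\mathcal{S}}$, not on bids. I would handle this by invoking the independence assumption explicitly and by showing that the set $\mathcal{S}$ is constant for all $b_i>\tau_1$. I would also pin down tie-breaking so that the threshold characterization is exact.
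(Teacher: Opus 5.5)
Your proposal is correct and takes essentially the same route as the paper's proof. Both arguments factor the expected utility as $(v_i-p_i)\bar{o}_i$ using conditional independence and unbiasedness, identify a bid-independent critical value at which $i$ starts winning, and finish with the same two-case comparison of $v_i$ against that threshold. Your explicit verification fills in what the paper only asserts with ``the allocation rule is monotone'': that $\mathcal{S}$ is fixed once $i$ enters the top-$K$, that the Stage 2 scores are therefore bid-independent on that region, and that the stated payment equals $\max\{\tau_1,\tau_2\}$.
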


\begin{proof}[Proof sketch]
    Under the independence and unbiasedness assumptions, advertiser $i$'s expected utility can be written as
    \begin{equation*}
        u_i = 
        \mathbb{E}_y[(\hat{v}_i(y)-p_i)\cdot o_i(y)]
        =
        (v_i-p_i)\cdot \bar{o}_i,
    \end{equation*}
    where $\bar{o}_i=\mathbb{E}_y[o_i(y)]$ is the expected click probability. 
    Since $\bar{o}_i$ is nonnegative, the advertiser's bidding decision reduces to a standard critical-value comparison between value and payment. 
    By construction, $p_{i}^{\mathrm{crit}}$ is the minimum bid required to pass both the Stage 1 entry threshold and the Stage 2 winning threshold. 
    Bidding above this value yields the same allocation and payment, while bidding below it loses the allocation. 
    Therefore, truthful bidding weakly dominates both underbidding and overbidding. 
    The full proof is deferred to Appendix~\ref{sec:appendix:truthful-proof}.
\end{proof}

This payment rule differs from a classic single-stage second-price auction because the final-stage competitors are themselves selected by the first-stage auction. 
An advertiser must therefore satisfy two constraints: being included in the candidate set and winning the LLM-based ranking. 
The proposed payment captures both constraints through the maximum of the two critical values.

\subsection{Extension to Multiple Ad Insertions}\label{sec:methods-multi}

The same two-stage mechanism can be extended to multiple ad insertions, either across turns in a dialogue or within a long-form response. 
Following the retrieve-then-generate workflow, we divide the response into $T$ segments. 
For segment $t$, the auction is conditioned on the original query $x$ and the response prefix $y_{<t}$. 
The mechanism selects either an advertiser $i_t^*$ or the no-insertion option, after which the LLM generates the next segment from $\text{Prompt}(x,y_{<t},d_{i_t^*})$, with $d_{i_t^*}$ omitted when $i_t^*=\phi$. 
The final response is the concatenation $y = y_1 \oplus y_2 \oplus \cdots \oplus y_T$.

The main additional cost comes from three components: reprocessing the prefix for each segment, keyword generation for Stage 1, and logit computation for Stage 2. 
In practice, $T$ is small because each segment typically contains several sentences. 
Prefix reprocessing is comparable to a small number of additional decoding steps, while Stage 2 requires only a single forward pass over a small candidate set. 
Keyword generation is the most visible extra cost, but it produces only a short sequence and can be traded off against retrieval quality. 
Thus, the overall latency overhead remains controllable, as empirically verified in Section~\ref{sec:experiment-diversity}.

\section{Experimental Results}\label{sec:experiment}

In this section, we evaluate \name\ against an Embedding-only baseline and several ablation variants. Across different model scales, we show that \name\ improves single ad selection accuracy. In more realistic long-form responses with multiple ad insertions, \name\ further achieves a favorable balance between user satisfaction and inference efficiency.

\subsection{Experimental Setup}\label{sec:experiment-setup}

\paragraph{Datasets.}
All experiments are conducted on a synthetic advertiser dataset consisting of 100 advertisers from 8 product categories. The advertisers are curated by prompting an LLM. Each advertiser is associated with a brand description of approximately 20 words, which serves as the reference text for both embedding-based retrieval and LLM-based scoring. Representative examples are provided in Appendix~\ref{sec:appendix:details_of_implementation}.

\paragraph{Methods.} 
We compare methods under the retrieve-then-generate paradigm. The retrieval and ranking strategies are as follows. \textbf{Embedding-only}~\citep{hajiaghayi2024ad} ranks advertisers solely by the embedding similarity between the user query and advertiser descriptions. \textbf{\name} is our proposed two-stage method, which first performs coarse filtering with embedding similarity and then applies LLM logits for fine-grained ranking. \textbf{\name\ (LLM-only)} removes the coarse filtering stage and directly applies LLM scoring to all $N$ advertisers. \textbf{\name\ (Keyword-Match)} uses LLM-generated keywords for embedding-based retrieval but omits the fine-grained LLM-logit scoring stage.

\paragraph{Implementation.}
All experiments use uniform bids ($b_i = 1.0$) to isolate the allocation and scoring mechanisms from auction dynamics under heterogeneous bidding.
We distinguish between two types of models used in our system: the \textit{Embed Model}, which produces embeddings for retrieval, and the \textit{Gen Model}, which is used for keyword generation, logit-based scoring, and final response generation. For the Gen Model, we evaluate three model scales: 4B, 30B, and 235B parameters. All experiments are conducted on NVIDIA A800 GPUs. Since our experiments only require LLM inference, we deploy all LLM inference services with SGLang~\citep{zheng2024sglang}, which better reflects the behavior of modern high-throughput inference engines.

\begin{table*}[t]
    \centering
    \caption{\textbf{Accuracy of single-ad insertion across varying strategies and model scales.} We evaluate the selection accuracy of Embed Models and Gen Models on three distinct simulated scenarios: Complex Intent Analysis, Natural Intent Shift, and Negative Product Refinement. \name\ consistently outperforms the embedding-only baseline.}
    \resizebox{0.94\textwidth}{!}{
    \begin{tabular}{lcc|cccc}
        \toprule
        \textbf{Method} & \textbf{Embed Model} & \textbf{Gen Model} & \textbf{Complex} & \textbf{Shift} & \textbf{Negative} & \textbf{Overall}\\
        \midrule
        \multirow{2}{*}{Embedding-only} & all-MiniLM-L6-v2 & N/A & 67.5\% & 68.8\% & 52.5\% & 62.93\% \\
        & Qwen3-Embedding-8B & N/A & 55.0\% & 60.0\% & 53.8\% & 56.3\% \\
        \midrule
        \multirow{3}{*}{LERA (LLM-only)} & N/A & Qwen3-4B & 76.2\% & 71.2\% & 72.5\% & 73.3\% \\
         & N/A & Qwen3-30B-A3B & 97.5\% & 86.2\% & 100\% & 94.57\% \\
         & N/A & Qwen3-235B-A22B & 100\% & 100\% & 100\% & 100\% \\
        \midrule
        \multirow{3}{*}{LERA (Keyword-Match)} & all-MiniLM-L6-v2& Qwen3-4B & 67.5\% & 73.8\% & 82.5\% & 74.6\%\\
         & all-MiniLM-L6-v2 & Qwen3-30B-A3B & 81.2\% & 90.0\% & 88.8\% & 86.67\% \\
         & all-MiniLM-L6-v2 & Qwen3-235B-A22B & 75.0\% & 81.2\% & 86.2\% & 80.8\% \\
        \midrule
        \multirow{3}{*}{LERA} & all-MiniLM-L6-v2 & Qwen3-4B & 91.2\% & 97.5\% & 95.0\% & 94.6\% \\
         & all-MiniLM-L6-v2 & Qwen3-30B-A3B & 96.2\% & 100\% & 100\% & 98.73\% \\
             & all-MiniLM-L6-v2 & Qwen3-235B-A22B & 96.2\% & 98.8\% & 97.5\% & 97.5\% \\
            \bottomrule
        \end{tabular}
    }
    \label{tab:single_ad_insertion}    
\end{table*}

\subsection{Selection Accuracy in Single Ad Insertion}\label{sec:experiment-accuracy}

We first evaluate single ad selection accuracy, which serves as the elementary allocation task in LLM-based chatbot advertising. Given a user query, each strategy is required to select the advertiser that is most suitable for insertion into the generated response.

To comprehensively evaluate different strategies, we construct three types of test queries that reflect different user intent scenarios:
\begin{enumerate}
\item \textbf{Complex Intent Analysis} (80 queries): queries containing nuanced or implicit constraints that require multifaceted reasoning.
\item \textbf{Natural Intent Shift} (80 queries): queries that initially mention one product or category but naturally shift toward a different target need.
\item \textbf{Negative Product Refinement} (80 queries): queries that explicitly reject an initially mentioned product before stating the actual need.
\end{enumerate}
Although ad insertion can be subjective in general, each test query in our benchmark is constructed to have a unique ground-truth advertiser in the advertiser pool. This allows us to report selection accuracy as a direct metric. Representative examples are provided in Appendix~\ref{sec:appendix:dataset}.

Table~\ref{tab:single_ad_insertion} reports the selection accuracy across retrieval strategies and model scales. The Embedding-only baseline achieves relatively low matching accuracy, even when using an 8B embedding model. This is expected: embedding models are primarily trained to capture semantic proximity, whereas ad selection often requires reasoning over intent shifts, explicit exclusions, and implicit constraints. In contrast, \name\ consistently improves selection accuracy across settings.

The advantage of \name\ is particularly pronounced for smaller generative models, such as the 4B model. In this regime, the two-stage design provides stronger guidance than either pure embedding retrieval or keyword-only retrieval. Moreover, for the 4B and 30B models, directly ranking all 100 candidate advertisers with LLM logits, i.e., \name\ (LLM-only), performs worse than our two-stage method. This suggests that the coarse filtering stage does not merely improve efficiency; it also reduces the difficulty of the fine-grained ranking problem by presenting the LLM with a smaller and more relevant candidate set.

On the other hand, \name\ (LLM-only) benefits most clearly from increased model capability. With the 235B model, this variant achieves perfect accuracy across all test categories. However, it is poorly suited to practical high-throughput inference engines such as SGLang and vLLM. These systems do not directly expose full logits over large candidate sets in the most efficient generation path, and computing logits over all advertisers requires parallel prefilling across the candidate set. This creates a substantial throughput bottleneck in real-time applications. We provide a detailed discussion in Appendix~\ref{sec:appendix:sglang_implementation}, and the latency results in the next subsection further demonstrate the overhead of \name\ (LLM-only).

For \name, we observe that the 235B model is slightly worse than the 30B model in some settings. In our experiments, this is mainly because the larger model sometimes adopts a more conservative behavior and chooses not to insert an advertisement when the candidates appear marginally relevant. Under our accuracy metric, such outcomes are counted as incorrect, which explains why the 235B model can underperform the 30B model in these cases.

\subsection{Multi Ad Insertion in Dynamic Chatting Flow}\label{sec:experiment-diversity}

We next evaluate multiple ad insertion, a more realistic setting in which ad allocation must be performed dynamically throughout a generated response. In real chatbot interactions, ad selection should evolve with the generated context. An ideal strategy should decide not only which ad to insert, but also whether an additional ad is still needed after previous insertions. It should also avoid repeatedly selecting advertisers from the same category when the user requests recommendations for multiple distinct needs.

To evaluate this setting, we construct 50 queries in which users explicitly request recommendations from three distinct ad categories. Following the multi-ad insertion workflow of Hajiaghayi et al.~\cite{hajiaghayi2024ad}, we split response generation into multiple segments. Before each segment, an auction is conducted to decide whether to insert an ad and which advertiser to select. The model then continues the response for two or three sentences using a prompt conditioned on the auction outcome. This process is repeated until the full response is generated. For a fair comparison, we limit the maximum number of inserted ads to three and set the number of first-stage winners in \name\ to $K=8$ for the multi-ad dynamic setting (we use $K=5$ for the single-ad setting reported in Table~\ref{tab:single_ad_insertion}).

\begin{wraptable}{r}{0.55\textwidth}
    \centering
    \caption{\textbf{Query satisfaction in dynamic multi-ad insertion scenarios.} 
    The metric represents the average proportion of user-requested categories successfully fulfilled by the inserted ads within a limit of three insertion opportunities. 
    \name\ achieves the best balance of context-aware diversity compared to ablation variants.}
    \resizebox{0.58\textwidth}{!}{
    \begin{tabular}{l|ccc}
        \toprule
        \textbf{Method} & 4B & 30B-A3B & 235B-A22B \\
        \midrule
        Embedding-only & 46.67\% & 42.00\% & 45.33\% \\
        LERA (LLM-only) & 48.61\% & 64.00\% & 86.67\% \\
        LERA (Keyword-Match) & 64.67\% & 63.33\% & 64.67\% \\
        LERA (Cosine-LLM) & 63.33\% & 58.67\% & 58.00\% \\
        LERA & 72.00\% & 80.00\% & 80.67\% \\
            \bottomrule
        \end{tabular}
    }
    \label{tab:multi_ad_insertion}    
\end{wraptable}

\paragraph{Query Satisfaction.}
We first measure how well the inserted ads satisfy user intent by computing the average proportion of requested categories that are fulfilled. Although the generated response may verbally address all user needs, this metric focuses on whether monetized ad insertions cover the user's requested categories. It therefore reflects the commercial effectiveness of the allocation strategy.

As shown in Table~\ref{tab:multi_ad_insertion}, \name\ improves over the Embedding-only baseline by 20\% to 40\% across the 4B to 235B models. This demonstrates that the proposed two-stage auction remains effective in a dynamic multiple insertion setting. Among the ablation variants, \name\ (Keyword-Match) also improves over the baseline, suggesting that LLM-generated keywords help bridge the gap between raw user queries and advertiser descriptions. However, keyword-based embedding retrieval alone remains insufficient, indicating the importance of the second-stage LLM-logit ranking.

We further include \name\ (Cosine-LLM), an additional ablation that uses cosine similarity rather than keyword similarity for the initial filtering stage. This variant also improves over the Embedding-only baseline, but still lags substantially behind \name. Since we set $K=5$, a coarse filtering stage based only on standard embedding similarity may exclude the preferred advertiser before LLM scoring, especially when the advertiser pool contains many semantically similar candidates. Together, the performance of these ablations supports the design of \name: effective ad allocation requires both a query-aware coarse filtering stage and a fine-grained LLM-based ranking stage.

We also report \name\ (LLM-only) as a special reference method. This method relies entirely on the model's own reasoning ability, and its performance therefore improves markedly as the Gen Model becomes stronger. With the 235B model, it achieves the best query satisfaction. However, as discussed below, this accuracy-oriented advantage comes with significant inference overhead.

\begin{figure*}[t]
    \centering
    \begin{minipage}[t]{0.48\textwidth}
        \centering
        \includegraphics[width=\linewidth]{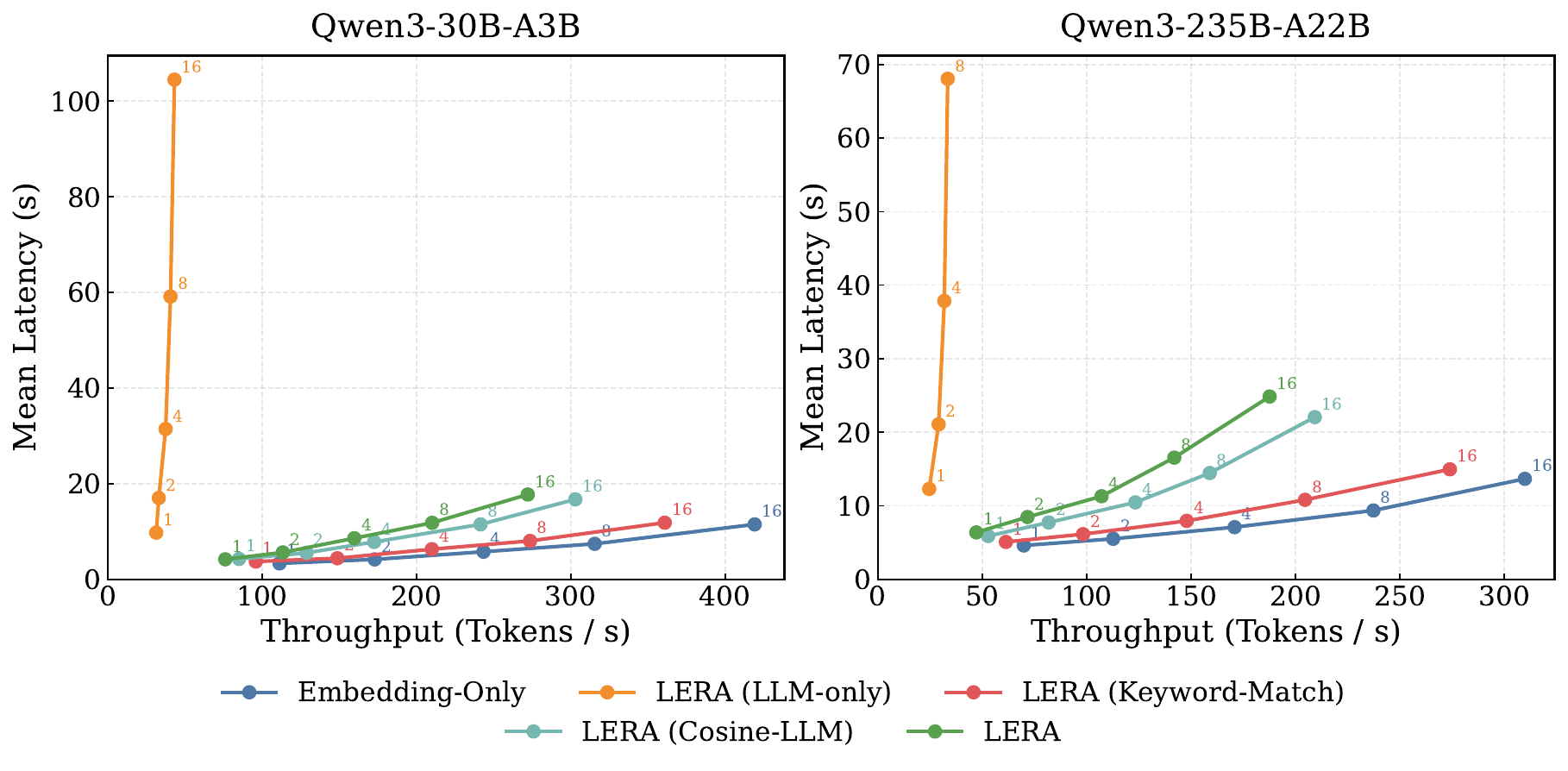}
        \captionof{figure}{\textbf{Latency and throughput tradeoff.} Comparison of \name\ and its ablation variants under different concurrency levels using 30B and 235B Gen Models. Each point corresponds to a concurrency configuration and illustrates the efficiency cost of more expensive auction mechanisms.}
        \label{fig:throughput}
    \end{minipage}
    \hfill
    \begin{minipage}[t]{0.48\textwidth}
        \centering
        \includegraphics[width=\linewidth]{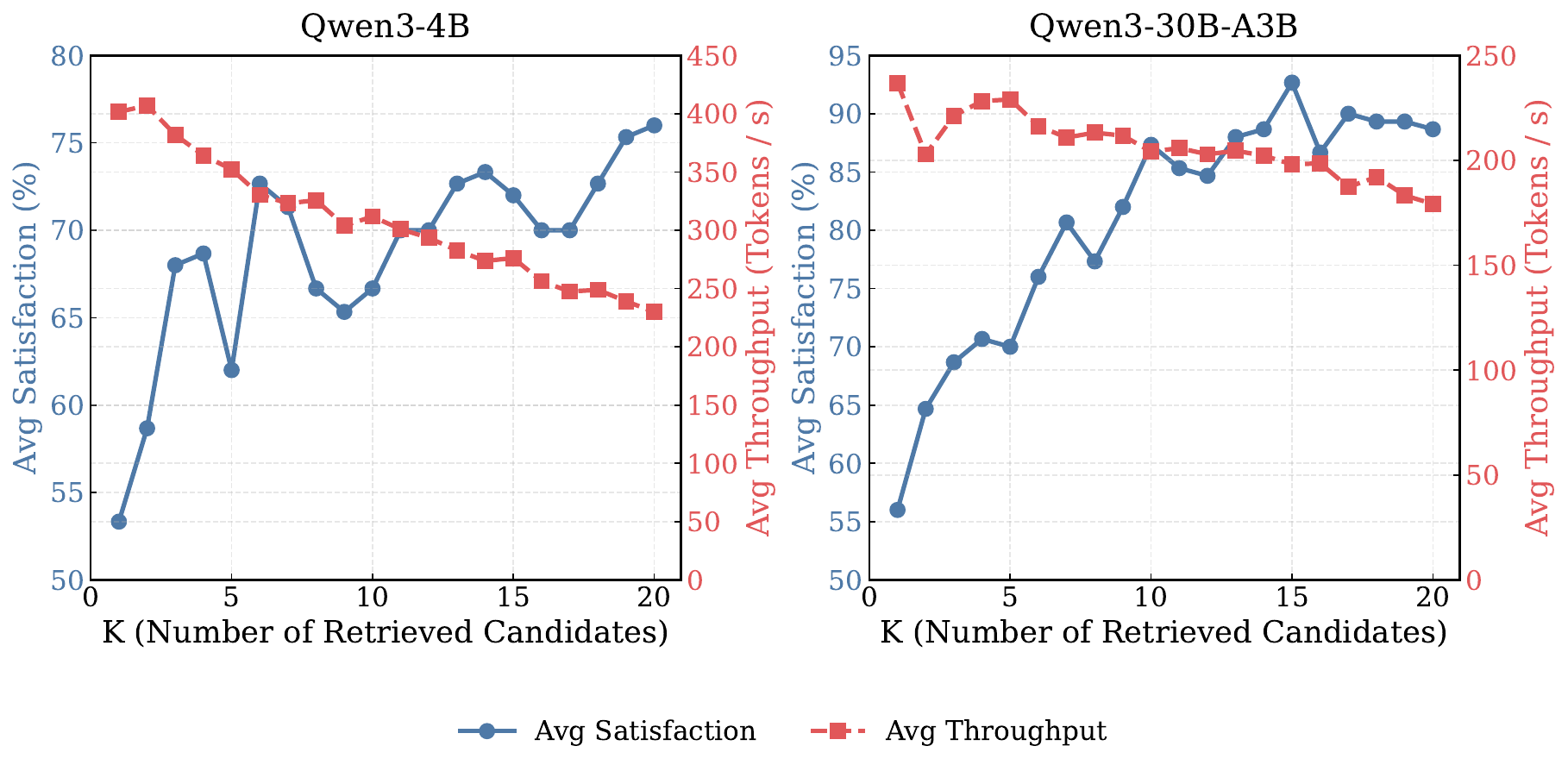}
        \captionof{figure}{\textbf{Effect of the first-stage winner set size $K$.} Ablation of the top-$K$ coarse filtering budget for LERA with 4B and 30B Gen Models. Increasing $K$ generally improves allocation quality but reduces throughput, revealing the tradeoff between candidate coverage and inference efficiency.}
        \label{fig:topk_ablation}
    \end{minipage}
\end{figure*}

\paragraph{Throughput and Latency.}
To evaluate deployability and practical efficiency, we measure latency, the overall time required to generate a response including the auction process, and throughput, the average number of generated tokens per second.
% These two metrics naturally trade off under different concurrency levels: higher concurrency can improve aggregate throughput, but it also increases the latency experienced by each query. 
Following standard inference system evaluations, we compare latency and throughput curves across models and methods by varying the concurrency level from 1 to 16.

As shown in Figure~\ref{fig:throughput}, \name\ (LLM-only) quickly reaches a throughput bottleneck because SGLang computes logits through parallel prefilling over a large candidate set. 
Although this variant performs well in accuracy-oriented evaluations, its computational cost makes it difficult to deploy in real-time auction systems. 
For \name, when the concurrency level is low, e.g., below 4, there is moderate throughput reduction and latency overhead compared with the Embedding-only baseline. 
At higher concurrency levels, however, the latency overhead becomes more visible. 
% For example, with the 235B model and a latency budget of 15 seconds, \name\ achieves around 130 tokens/s, whereas the baseline maintains approximately 300 tokens/s. 
This highlights a practical tradeoff: \name\ substantially improves ad matching quality, but it also consumes additional inference resources. 
In deployment scenarios, this cost can be justified when improved matching accuracy are transformed into higher platform revenue or better advertiser utility.
Designing auction mechanisms that better align with modern inference engines remains an important direction for future work.

\paragraph{Ablation Study on the First-Stage Candidate Set Size $K$.}
Finally, we study how the first-stage candidate set size $K$ affects \name. We fix the concurrency level to 8 and vary $K$ to evaluate its impact on both query satisfaction and throughput.
As shown in Figure~\ref{fig:topk_ablation}, increasing $K$ generally improves performance because the second-stage LLM scorer has access to a broader candidate set. However, once $K$ becomes sufficiently large, the performance gain diminishes and may even slightly decrease. One possible reason is that a larger candidate set introduces less relevant advertisers into the second-stage ranking problem, making fine-grained LLM scoring more difficult. In terms of efficiency, throughput decreases as $K$ increases because more candidates require LLM-based scoring. These results suggest that $K$ should be chosen to balance allocation quality and inference efficiency, rather than being set as large as possible.

% \vspace{-10pt}
\section{Conclusion}\label{sec:conclusion}

In this work, we presented a novel two-stage auction mechanism for advertisement insertion in LLM-based chatbots that effectively balances semantic relevance with commercial objectives. Our approach combines the efficiency of embedding-based coarse filtering with the semantic sophistication of LLM-based fine-grained ranking, addressing key limitations of existing methods. The proposed framework demonstrates superior performance in handling complex user intents, maintaining ad diversity, and preserving reasonable computational overhead. Future work could explore adaptive mechanisms for dynamically adjusting the balance between semantic relevance and commercial value based on contextual factors.

\section*{Limitations}
This paper studies auction design within the retrieve-then-generate paradigm. Due to the lack of an off-the-shelf open-source dataset, our experiments rely on a synthetic dataset constructed by an LLM. The theoretical results are based on an independence assumption, which we discuss further in Appendix~\ref{sec:appendix:truthful-proof}. Despite these limitations, we believe the proposed framework represents a step toward the practical deployment of auction-based ad insertion in LLM-powered chatbots.

\section*{Broader Impacts}
As \name\ has strong practical significance, we believe it can be used to enhance the commercial effectiveness of LLM-based chatbots. However, we also acknowledge that the proposed framework may have potential negative societal impacts. For example, it could be used to insert advertisements into user responses, which might be perceived as intrusive or annoying. Therefore, careful design of the auction mechanism is necessary to mitigate such negative societal impacts when the method is deployed in real-world applications.

\bibliographystyle{plain}
\bibliography{references}
\appendix

\appendix

\section{Appendix Overview}

This appendix provides additional materials supporting the main paper. Appendix~\ref{sec:related} expands the discussion of related work. Appendix~\ref{sec:appendix:details_of_implementation} reports implementation details and generation settings for reproducibility. Appendix~\ref{sec:appendix:dataset} describes the construction of the synthetic advertiser-query benchmark. Appendix~\ref{sec:appendix:prompts} lists the prompt templates used in different stages of the framework. Appendix~\ref{sec:appendix:examples} provides representative case studies, and Appendix~\ref{sec:appendix:truthful-proof} contains the omitted proof of Proposition~\ref{thm:truthful}.

\section{Additional Related Work}\label{sec:related}

This section provides a more detailed discussion of the literature related to our work. Our framework lies at the intersection of LLM-mediated advertising, retrieval-augmented generation, LLM-based recommendation, and externality-aware auction design. We therefore organize the discussion around four themes: ad auctions in LLM interfaces, retrieval-augmented generation, ad recommendation with LLMs, and externalities in auction mechanisms.

\subsection{Ad Auctions in LLMs}

Research on integrating advertising auctions into LLM-based chatbots is a rapidly emerging area. Existing frameworks can generally be categorized into \textit{generate-then-retrieve} and \textit{retrieve-then-generate} paradigms, depending on whether the auction is applied after a candidate response is generated or before the final response is produced.

In the \textit{generate-then-retrieve}, or post-generation, approach, the model first generates candidate responses or probability distributions, which are then used as the basis for auction decisions. Duetting et al.~\cite{duetting2024mechanism} pioneered a mechanism in which advertisers bid on token probabilities, aggregating preferences from multiple LLMs to produce a joint output. Similarly, Mordo et al.~\cite{mordo2024sponsored} and Soumalias et al.~\cite{soumalias2024truthful} explore auctioning the ``right to speak'' or inserting ads into generated placeholders. While these methods are theoretically grounded, they often require multiple inference passes or complex output aggregation, leading to high latency and inference costs~\citep{tang2024genai}.

In contrast, the \textit{retrieve-then-generate} paradigm, recently formalized by Hajiaghayi et al.~\cite{hajiaghayi2024ad}, treats advertisers as retrievable documents within a RAG framework. An auction determines which ads are retrieved and inserted into the prompt context before generation begins. This approach is modular and computationally efficient because it decouples the auction stage from the expensive generation process. However, existing implementations often rely heavily on standard retrieval scores, which may not fully align with commercial objectives. This limitation motivates our framework, which introduces a selection mechanism that explicitly combines coarse retrieval with LLM-based ad relevance estimation.

More recently, Zhao et al.~\cite{zhao2025llm} proposed a training method that enables an LLM to generate responses satisfying bid monotonicity. Although this approach combines retrieval and generation, it relies on costly model training and may be less flexible when the ad inventory changes frequently. Our work instead focuses on a modular inference-time mechanism that can be applied without retraining the underlying LLM.

\subsection{Retrieval-Augmented Generation}

Retrieval-augmented generation was introduced to ground LLM responses in external and up-to-date knowledge~\citep{lewis2020retrieval}. A standard RAG pipeline retrieves documents based on vector similarity, often using dense embeddings, and then feeds the retrieved documents into the generator~\citep{gao2023retrieval}. This paradigm has been widely adapted for downstream tasks, including recommendation systems where items are retrieved as contextual evidence~\citep{csakar2025maximizing}.

As discussed in the main paper, RAG offers a natural abstraction for ad insertion: advertisements can be viewed as external knowledge chunks that may be inserted into the LLM context. However, a fundamental difference remains between ordinary RAG and sponsored generation. Standard RAG primarily aims to maximize semantic fidelity to the user query, whereas ad insertion must also account for revenue, advertiser value, and the likelihood that an ad can be integrated without degrading user experience. In this sense, the relevant retrieval signal is not merely semantic similarity, but a broader notion of commercial compatibility. Our two-stage retrieval and scoring process adapts standard RAG principles to this competitive and revenue-driven setting.

\subsection{Ad Recommendation with LLMs}

Beyond auction design, a growing body of work studies LLMs as ranking and recommendation engines. Recent surveys suggest that LLMs possess strong semantic understanding that can be leveraged for zero-shot item recommendation and ranking~\citep{zhao2024recommender, li2024large}. Liu et al.~\cite{liu2023chatgpt} and Dai et al.~\cite{dai2023uncovering} show that LLMs can evaluate the relevance of candidate items to a user's query and history, often outperforming traditional embedding-based methods in capturing nuanced user intent.

The key distinction between this literature and our work is the economic layer. LLM-based recommendation studies typically focus on maximizing relevance, utility, or user satisfaction, while abstracting away from strategic advertiser behavior and incentive-compatible pricing. In contrast, our framework uses the LLM's relevance assessment as an organic compatibility score within a truthful auction mechanism. This allows us to preserve the semantic strengths of LLM-based recommendation while incorporating advertiser bids and payment rules.

\subsection{Externality in Ad Auction Design}

Early work on sponsored search auctions, such as the Generalized Second Price auction, primarily modeled externalities as position effects, assuming that an ad's click probability depends on its slot and inherent quality~\citep{edelman2007internet, varian2007position}. As advertising shifted from search pages to recommendation feeds and mixed-content environments, later work began to study richer \textit{allocative externalities}, where an ad's value and user impact depend on neighboring content.

To address such effects, researchers have proposed deep learning-based mechanisms. \textit{Deep GSP}~\citep{zhang2021optimizing} and \textit{Neural Auctions}~\citep{liu2021neural} use deep networks to learn non-linear ranking scores that capture inter-item dependencies while maintaining incentive compatibility through monotone allocation rules. More recent mechanisms, including WCVCG~\cite{li2023learning}, \textit{MIAA}~\citep{li2024deep}, and \textit{Contextual Generative Auctions}~\citep{zhu2025contextual}, explicitly model whole-slate effects by jointly optimizing the arrangement of ads and organic items to balance platform revenue with user experience.

While these works successfully handle externalities in static lists or feeds, LLM responses introduce a different type of \textit{generative externality}. Unlike a fixed feed, an LLM response is a cohesive narrative in which an ad insertion can alter the text's flow, tone, specificity, and length~\citep{balseiro2025position}. Our work extends externality-aware design to this generative setting by proposing a payment rule that accounts for the dynamic organic probability of an ad within a generated response.

Overall, existing work has studied auction design for LLM-generated content, LLM-based recommendation, and externalities in traditional ad allocation settings. However, these lines of work have not fully addressed the setting where advertisements are selected before or during LLM generation and where the final response itself determines the organic compatibility of an ad. Our work complements this literature by introducing a retrieval-based auction framework that combines efficient candidate selection, LLM-based relevance estimation, and an externality-aware payment rule tailored to generative responses.

\section{Implementation Details}\label{sec:appendix:details_of_implementation}

We provide implementation details to facilitate reproducibility and to clarify how the proposed mechanism is instantiated in practice. Unless otherwise stated, all methods are evaluated under the same model, retrieval, and generation settings, so that performance differences can be attributed primarily to the auction and selection strategies rather than to implementation-specific choices.

Our experiments are implemented using the SGLang framework. We use SGLang primarily to leverage its RadixAttention mechanism for efficient KV-cache reuse, which is important for our multi-turn dynamic insertion strategies. In these strategies, the system repeatedly evaluates whether a new advertisement should be inserted as the response unfolds, and therefore benefits from reusing previously computed context representations.

\begin{table}[t]
\centering
\small
\caption{Summary of implementation settings.}
\label{tab:implementation_settings}
\begin{tabularx}{0.95\linewidth}{lY}
\toprule
\textbf{Component} & \textbf{Setting} \\
\midrule
LLM backbone & Qwen models, including Qwen-4B, Qwen-30B, and Qwen-235B, are used for auction decision-making and response generation. \\
Embedding model & \texttt{sentence-transformers/all-MiniLM-L6-v2} is used for cosine-similarity filtering and combined retrieval strategies. \\
Serving framework & SGLang is used to leverage RadixAttention for efficient KV-cache reuse in dynamic multi-turn insertion. \\
Response generation & Temperature is set to $0.7$; the maximum generation length is $512$ tokens for single-turn generation and $64$ tokens per chunk for dynamic generation. \\
Auction scoring & Temperature is set to $0.0$ during LLM-based auction scoring to obtain deterministic log-probabilities for Q-score computation. \\
Dynamic insertion & The chat-based dynamic strategy is limited to at most $10$ generation chunks and at most $3$ ad insertions per session. \\
Initial retrieval & The top-$K$ retrieval parameter is set to $K=5$ in the single-turn setting and $K=8$ in the dynamic setting. \\
\bottomrule
\end{tabularx}
\end{table}

These settings are designed to separate the effects of retrieval and auction design from those of stochastic decoding. In particular, deterministic scoring is used for the auction stage, while moderate-temperature generation is retained for producing natural conversational responses.

\subsection{Detailed Inference Hyperparameters}
Table~\ref{tab:inference_hyperparameters} provides a comprehensive list of inference hyperparameters used in our experiments, extracted from the implementation code for full reproducibility.

\begin{table}[h]
\centering
\small
\caption{Detailed inference hyperparameters.}
\label{tab:inference_hyperparameters}
\begin{tabularx}{0.95\linewidth}{lX}
\toprule
\textbf{Hyperparameter} & \textbf{Value / Setting} \\
\midrule
\multicolumn{2}{l}{\textbf{General Settings}} \\
Device & \texttt{cuda} (NVIDIA A800 GPUs) \\
Random seed & $42$ \\
Maximum concurrent queries & $8$ \\
\midrule
\multicolumn{2}{l}{\textbf{Response Generation}} \\
Generation temperature & $0.7$ \\
Max new tokens (single-turn) & $512$ \\
Max new tokens (dynamic per chunk) & $64$ \\
Sampling strategy & Top-p sampling (default SGLang) \\
\midrule
\multicolumn{2}{l}{\textbf{Keyword Generation (Stage 1)}} \\
Keyword count range & $2$--$3$ keywords \\
Keyword gen temperature & $0.7$ \\
Keyword gen max tokens & $64$ \\
\midrule
\multicolumn{2}{l}{\textbf{LLM Auction Scoring (Stage 2)}} \\
Scoring temperature & $0.0$ (deterministic) \\
Logprob source & SGLang \texttt{normalized\_prompt\_logprobs} \\
Choice symbols & Numeric (``1'', ``2'', ...) or single-token when enabled \\
No-insertion symbol & ``0'' \\
No-insertion margin & $0.0$ \\
Score normalization & Softmax over logprobs \\
\midrule
\multicolumn{2}{l}{\textbf{Two-Stage Filtering}} \\
Top-$K$ candidates (single-turn) & $K=5$ \\
Top-$K$ candidates (dynamic) & $K=8$ \\
\midrule
\multicolumn{2}{l}{\textbf{Chat-Based Dynamic Strategy}} \\
Maximum generation chunks & $10$ (\texttt{max\_parts}) \\
Maximum ad insertions & $3$ (\texttt{max\_insertions}) \\
System prompt & ``You are a helpful assistant...'' (see Appendix~\ref{sec:appendix:prompts}) \\
Response termination & Empty, whitespace, or ``END'' token \\
\midrule
\multicolumn{2}{l}{\textbf{Bid Sampling}} \\
Alternative & Uniform $1.0$ (\texttt{--all\_one\_bids} flag) \\
\bottomrule
\end{tabularx}
\end{table}

\subsection{Logit Computation with SGLang}\label{sec:appendix:sglang_implementation}

As introduced in Section~\ref{sec:experiment-setup}, we utilize SGLang~\cite{zheng2024sglang} for our LLM inference to reflect the performance characteristics of modern, high-throughput deployment environments. However, extracting log probabilities (log probs) for the second-stage scoring requires specific engineering considerations due to the design of such inference engines.

In the second stage, the mechanism evaluates $K+1$ candidate choices to compute refined relevance scores. Standard inference libraries, such as Hugging Face \texttt{transformers}, allow direct access to the model's raw computational results, enabling the extraction of full vocabulary logits in a single forward pass. In contrast, highly optimized serving frameworks like SGLang prioritize overall throughput and memory management, and therefore do not natively expose intermediate logits for arbitrary token choices during a standard generation call.

To obtain the required log probabilities, SGLang evaluates multiple choices through parallel prefilling. Under this mechanism, $K+1$ distinct prompt variations are constructed—each explicitly appending one of the candidate selections. The inference engine processes these prompts concurrently and returns the sequence-level log probabilities, which are then used directly to derive the final Stage 2 auction scores. 

Although the Hugging Face engine could theoretically achieve this logit extraction with fewer forward passes, its inference efficiency is significantly lower. Because most practical, large-scale LLM services today are deployed on high-throughput engines like SGLang or vLLM, choosing SGLang for our experiments ensures that the measured efficiency metrics, latency tradeoffs, and observed throughput bottlenecks accurately reflect real-world production constraints.

\section{Dataset Construction Details}\label{sec:appendix:dataset}

This section describes the construction of our synthetic benchmark for evaluating ad selection in LLM-mediated responses. The goal of the dataset is not to reproduce a specific commercial ad inventory, but to create controlled query-ad matching scenarios in which semantic relevance, commercial intent, negation, and intent shifts can be systematically evaluated.

The dataset contains advertiser descriptions and user queries, both generated with the assistance of LLMs and manually organized into controlled evaluation groups. Advertiser descriptions are designed to be short enough to fit naturally into prompts while still containing specific product attributes that allow fine-grained matching. User queries are constructed to test whether a method can distinguish true commercial intent from superficial lexical or semantic overlap.

\subsection{Advertiser Categories and Descriptions}

We define eight product categories and generate multiple advertiser descriptions for each category, resulting in 100 unique advertisers. Each advertiser description is approximately 20--30 words long and highlights product features, target users, or use cases. Table~\ref{tab:dataset_categories} lists the categories and a representative advertiser description for each.

\begin{table}[t]
    \centering
    \small
    \caption{Advertiser categories and representative advertiser descriptions.}
    \label{tab:dataset_categories}
    \begin{tabularx}{0.95\linewidth}{lY}
        \toprule
        \textbf{Category} & \textbf{Representative Description} \\
        \midrule
        Snacks & \textbf{CrispFalls Potato Chips.} CrispFalls specializes in classic, ridged potato chips with a crisp snap, targeting traditional snackers who prefer salty crunch and familiar flavors. \\
        Beverages & \textbf{VelvetBean Cold Brew.} VelvetBean offers nitrogen-infused cold brew coffee sourced from single-origin Ethiopian beans, designed for coffee drinkers seeking a naturally sweet and creamy texture. \\
        Electronics & \textbf{SilentStorm Noise-Cancelling Earbuds.} SilentStorm produces true wireless earbuds with active noise cancellation for commuters and focus-seekers who want immersive audio. \\
        Tops & \textbf{SummitShell Rain Jacket.} SummitShell designs lightweight, waterproof, and breathable rain jackets for outdoor adventurers and urban commuters. \\
        Footwear & \textbf{TerraTread Hiking Boots.} TerraTread manufactures rugged waterproof hiking boots with deep-lug soles for backpackers and hikers on technical terrain. \\
        Skincare & \textbf{CitraBlast Vitamin C Serum.} CitraBlast formulates stabilized Vitamin C serum for brightening and antioxidant protection, targeting users concerned with skin dullness. \\
        Home \& Living & \textbf{NimbusLight Desk Lamp.} NimbusLight designs adjustable smart desk lamps with color-temperature control for students and professionals needing glare-free lighting. \\
        Food \& Dining & \textbf{OceanPrime Seared Salmon.} OceanPrime offers ready-to-eat, sustainably sourced Atlantic salmon fillets prepared using a sous-vide technique. \\
        \bottomrule
    \end{tabularx}
\end{table}

\subsection{Test Queries}

The test queries are designed to stress different aspects of intent understanding. In addition to straightforward product matching, the benchmark includes queries involving implicit constraints, natural intent shifts, and explicit rejection of superficially similar products. These cases are particularly challenging for embedding-based retrieval methods because lexical or semantic overlap may point to products that the user explicitly does not want.

We construct two query sets: 240 single-ad queries and 50 multi-ad queries. Single-ad queries require the system to select one appropriate advertiser, while multi-ad queries require the system to identify and address multiple distinct product needs within a single response.

\subsubsection{Single-Ad Queries}

Single-ad queries are stratified into three groups, with 80 queries per group. Each group targets a different form of user intent complexity.

\paragraph{Type 1: Complex Intent Analysis.}
These queries contain nuanced constraints that require understanding multiple implicit requirements, such as avoiding certain product characteristics while seeking specific qualities.

\begin{querybox}{Type 1 Example: Complex Intent Analysis}
\small
\textbf{User query.} I'm having friends over for a movie marathon. I don't want anything greasy that will ruin the couch, nor heavy candy bars. Just looking for a light, airy bite that perfectly balances sweet and salty flavors in a classic bowl.

\vspace{0.3em}
\textbf{Ground truth.} PoppyHaven Kettle Popcorn.
\end{querybox}

\begin{querybox}{Type 1 Example: Complex Intent Analysis}
\small
\textbf{User query.} I need to hit my macros after leaving the gym, but I despise the chalky texture of typical supplements. I'm looking for something that feels like a decadent, chewy dessert but is packed with whey isolate.

\vspace{0.3em}
\textbf{Ground truth.} NutRidge Protein Granola Bars.
\end{querybox}

\paragraph{Type 2: Natural Intent Shift.}
These queries initially reference products from other categories or similar use cases, then naturally transition to the target category. A successful method must identify the user's final operative intent rather than over-weighting the initial context.

\begin{querybox}{Type 2 Example: Natural Intent Shift}
\small
\textbf{User query.} I used to rely heavily on dark roast drip coffees and high-octane tropical energy drinks to get through my shifts, but they make me too jittery now. I'm looking for a beverage that gives a calmer focus or just a sophisticated caffeine-free refreshment to sip on.

\vspace{0.3em}
\textbf{Excluded products.} BlazeBrew Dark Roast Coffee; NeonFizz Energy Drink.

\textbf{Target products.} MatchaZen Organic Matcha Latte; SparkLeaf Botanical Tea.
\end{querybox}

\begin{querybox}{Type 2 Example: Natural Intent Shift}
\small
\textbf{User query.} I'm trying to cut back on greasy potato chips, even those super spicy kettle-cooked ones I usually love. For my upcoming road trip, I want to pack something savory and high in protein but without all that oil, maybe some air-dried meat or spiced legumes.

\vspace{0.3em}
\textbf{Excluded products.} CrispFalls Potato Chips; LavaThin Potato Chips.

\textbf{Target products.} SmokeTrail Beef Jerky; PulsePoint Roasted Chickpeas.
\end{querybox}

\paragraph{Type 3: Negative Product Refinement.}
These queries mention products within the same broad category but explicitly reject them before revealing the user's true preference. This setting tests whether the method can distinguish negative evidence from positive relevance signals.

\begin{querybox}{Type 3 Example: Negative Product Refinement}
\small
\textbf{User query.} I was looking at heavy insulated winter boots with deep traction, but that is more protection than I need. I am not hiking through snow; I just want a lightweight, breathable pair of walking shoes for everyday city errands.

\vspace{0.3em}
\textbf{Negated products.} Heavy-duty insulated winter boots.

\textbf{Target products.} Lightweight everyday walking shoes.
\end{querybox}

\begin{querybox}{Type 3 Example: Negative Product Refinement}
\small
\textbf{User query.} I have a terrible sweet tooth and usually binge on rich chocolate wafer cookies or heavy chocolate-chunk protein bars at my desk. I want to replace those with simpler, natural snacks like honey-crystallized tropical fruits or just some simple dry-roasted nuts with herbs.

\vspace{0.3em}
\textbf{Negated products.} Chocolate cookies; protein bars.

\textbf{Target products.} Dried fruits; dry-roasted nuts.
\end{querybox}

\subsubsection{Multi-Ad Queries}

Multi-ad queries are designed to simulate long-form or multi-need responses requiring diverse recommendations across multiple categories. Each query contains several distinct product needs, and the system is evaluated on whether its selected advertisements cover the requested categories.

\begin{querybox}{Multi-Ad Example}
\small
\textbf{User query.} I've recently started intermittent fasting and often feel hungry in the afternoon, but I don't want anything too greasy. Can you recommend a healthy snack option to tide me over? Also, I get really drowsy when working from home in the evening and need something to wake me up that isn't as intense as an energy drink. Finally, my neck has been sore lately, and I'd like to upgrade my desk lamp to something better for eye protection.

\vspace{0.3em}
\textbf{Target categories.} Snacks; beverages; home \& living.
\end{querybox}

\begin{querybox}{Multi-Ad Example}
\small
\textbf{User query.} I'm heading to the beach for a three-day camping trip next week. I need a lightweight outer layer that can keep me warm at night but won't take up much luggage space. Since we'll be walking a lot on sand and rocks, can you suggest a pair of sandals that drain water well? And with all that sun exposure, I'd like a high-protection facial sunscreen that won't leave a white cast and is easy to reapply.

\vspace{0.3em}
\textbf{Target categories.} Tops; footwear; skincare.
\end{querybox}

The multi-ad setting is useful because a method can fail even when each individual need is simple. In particular, retrieval-based baselines may collapse onto the most salient category or repeatedly select near-duplicate ads, while LLM-only methods may overlook less prominent needs in a long query.

\section{Prompt Templates}\label{sec:appendix:prompts}

This section reports the prompt templates used across the different stages of the proposed framework. Variables such as the user query $x$, the current response context, and advertiser descriptions $d_i$ are dynamically inserted during inference. We present the prompts in a structured format to make the role of each instruction clear. The implementation uses semantically equivalent templates, with minor wording changes depending on whether the single-turn or dynamic multi-turn setting is used.

\subsection{Stage 1: Keyword Generation Prompt}

The first-stage prompt extracts a small set of product-category keywords from the user query. These keywords are used only for coarse candidate filtering; the final ad selection is performed in the second stage.

\begin{promptbox}{Stage 1: Keyword Generation}
\small
\textbf{System instruction.}
Extract exactly two to three relevant product-category keywords for ad matching.

\vspace{0.4em}
\textbf{Goal.}
The keywords will be used to select one ad for the entire upcoming assistant response. For the dynamic multi-turn setting, the scope is changed to the next one or two sentences.

\vspace{0.4em}
\textbf{Input variables.}
\begin{itemize}[leftmargin=1.5em, itemsep=0.1em]
    \item User query: $\{x\}$.
    \item Current response context: $\{context\}$, if applicable.
\end{itemize}

\vspace{0.2em}
\textbf{Instructions.}
\begin{enumerate}[leftmargin=1.5em, itemsep=0.1em]
    \item Extract keywords representing general product or service categories, such as \texttt{coffee\_maker}, \texttt{travel\_bag}, or \texttt{fitness\_app}.
    \item Account for the user's needs and the conversation flow.
    \item If context is provided, avoid duplicating products or categories that have already been mentioned.
    \item Output only comma-separated keywords, with no explanation.
\end{enumerate}

\vspace{0.2em}
\textbf{Output format.}
\texttt{keyword\_1, keyword\_2, keyword\_3}
\end{promptbox}

\subsection{Stage 2: LLM Scoring Prompt}

The second-stage prompt evaluates the candidate set $\hat{\mathcal{S}}$ and asks the LLM to select the most appropriate advertiser or the no-insertion option $\phi$. The probability assigned to each choice is then used to compute the LLM-based quality score.

\begin{promptbox}{Stage 2: LLM Scoring Prompt}
\small
\textbf{System instruction.}
You are an expert at deciding whether and how an advertisement should be integrated into an assistant response.

\vspace{0.4em}
\textbf{Decision scope.}
Select one advertisement, or select no insertion, for the entire response. In the dynamic multi-turn setting, the decision scope is restricted to the next few sentences.

\vspace{0.4em}
\textbf{Input variables.}
\begin{itemize}[leftmargin=1.5em, itemsep=0.1em]
    \item User query: $\{x\}$.
    \item Current response context: $\{context\}$, if applicable.
    \item Candidate advertiser descriptions: $\{d_1,\ldots,d_K\}$.
\end{itemize}

\vspace{0.4em}
\textbf{Choices.}
\begin{align*}
\text{Choice }0 &: \text{No insertion},\\
\text{Choice }1 &: \{d_1\},\\
\text{Choice }2 &: \{d_2\},\\
&\vdots\\
\text{Choice }K &: \{d_K\}.
\end{align*}

\vspace{0.2em}
\textbf{Selection criterion.}
Follow the user's preferences when deciding whether to insert an advertisement and which advertisement to insert. The goal is to balance user experience with the expected click-through probability of the selected ad.

\vspace{0.2em}
\textbf{Output format.}
\texttt{Choice [index]}
\end{promptbox}

\subsection{Final Response Generation Prompt}

After the auction selects a winning advertiser, the final response generation prompt incorporates the winning advertiser description $d_{i^*}$ into the generation context. The prompt encourages a smooth transition between the organic response and the sponsored content.

\begin{promptbox}{Final Response Generation Prompt}
\small
\textbf{System instruction.}
You are a helpful assistant. Respond to the user's query in approximately ten sentences. If an advertisement is selected for insertion, integrate it smoothly and coherently into the response.

\vspace{0.4em}
\textbf{Input variables.}
\begin{itemize}[leftmargin=1.5em, itemsep=0.1em]
    \item User query: $\{x\}$.
    \item Current response context: $\{context\}$, if applicable.
    \item Selected advertiser: $\{ad\_name\}$.
    \item Advertiser description: $\{d_{i^*}\}$.
\end{itemize}

\vspace{0.4em}
\textbf{Dynamic chat-mode instruction.}
Continue the response within three sentences. In these sentences, naturally incorporate $\{ad\_name\}$ without mentioning other brands. Maintain local coherence with the preceding context and improve the overall flow of the response. The advertiser description is: $\{d_{i^*}\}$.
\end{promptbox}

The prompt templates are intentionally simple and do not require task-specific finetuning. This design choice makes the mechanism easier to adapt to changing advertiser inventories, since new advertisers can be added by updating the candidate descriptions rather than retraining the model.

\section{Additional Case Studies}\label{sec:appendix:examples}

This section provides qualitative examples that complement the aggregate experimental results in the main paper. The goal is to illustrate the types of query understanding failures that arise in LLM-mediated ad selection, especially when user intent is expressed through implicit constraints, intent shifts, or negated product mentions. For each case, we report the user query, the ground-truth advertisement or target categories, the predictions made by each method, and a brief analysis of the observed failure modes.

\subsection{Single-Ad Queries}

Single-ad queries require the mechanism to identify one most appropriate advertiser from the candidate inventory. The following examples highlight cases where surface-level semantic similarity is insufficient, because the correct ad depends on understanding constraints, negation, or the user's final intent after an initial contextual description.

\subsubsection{Case A: Complex Intent Analysis}

\begin{querybox}{Case A: Complex Intent Analysis}
\small
\textbf{User query.} My gums are a bit sensitive today so I can't eat anything with a hard, brittle crunch. I'm craving a rich, aged dairy flavor in a format that just melts on the tongue without much chewing.

\vspace{0.3em}
\textbf{Ground truth.} VelvetPuff Cheese Corn (Snacks).
\end{querybox}

\begin{table}[h]
\centering
\small
\caption{Predictions for Case A.}
\label{tab:case_a_predictions}
\begin{tabularx}{0.95\linewidth}{l l Y}
\toprule
\textbf{Method} & \textbf{Prediction} & \textbf{Interpretation} \\
\midrule
\name\  & \textbf{VelvetPuff Cheese Corn} & Correctly identifies the need for a soft, melt-in-mouth cheese-flavored snack. \\
\name\ (LLM-Only) & \textbf{VelvetPuff Cheese Corn} & Also captures the user's nuanced constraint. \\
\name\ (Cosine) & FlexFuel Chocolate Protein Shake & Misinterprets the query as seeking a drink rather than a snack. \\
\name\ (Keyword-Match) & AegeanCream Strained Greek Yogurt & Overemphasizes dairy-related semantics while missing the snack context. \\
\bottomrule
\end{tabularx}
\end{table}

\paragraph{Analysis.}
This case illustrates why surface-level semantic similarity is insufficient for ad selection. The query contains several partial signals, such as softness and dairy flavor, but the relevant commercial intent is a light snack that requires little chewing. The combined method preserves this intent, whereas embedding-based baselines are distracted by isolated semantic matches.

\subsubsection{Case B: Natural Intent Shift}

\begin{querybox}{Case B: Natural Intent Shift}
\small
\textbf{User query.} My daily commute is brutal, but at least my Thinsulate-lined winter boots keep my feet warm and give me traction in the deep snow. My upper body is suffering, though. A regular lightweight jacket won't cut it; I need a heavy-duty, wind-blocking down coat that is specifically rated for sub-zero temperatures.

\vspace{0.3em}
\textbf{Ground truth.} ArcticShield Parka (Tops).
\end{querybox}

\begin{table}[h]
\centering
\small
\caption{Predictions for Case B.}
\label{tab:case_b_predictions}
\begin{tabularx}{0.95\linewidth}{l l Y}
\toprule
\textbf{Method} & \textbf{Prediction} & \textbf{Interpretation} \\
\midrule
\name\  & \textbf{ArcticShield Parka} & Correctly identifies the need for a winter coat despite the initial mention of boots. \\
\name\ (LLM-Only) & GoldenBean Espresso Machine & Fails to identify the relevant apparel need. \\
\name\ (Cosine) & FrostLine Insulated Winter Boots & Over-weights the initially mentioned footwear context. \\
\name\ (Keyword-Match) & FrostLine Insulated Winter Boots & Also fails to recognize the intent shift from footwear to outerwear. \\
\bottomrule
\end{tabularx}
\end{table}

\paragraph{Analysis.}
This case highlights the challenge of natural intent shift. The user mentions winter boots to establish context, but the true need is for a sub-zero winter coat. Embedding-based methods are trapped by the earlier product mention, while the combined method correctly focuses on the user's final request.

\subsubsection{Case C: Natural Intent Shift}

\begin{querybox}{Case C: Natural Intent Shift}
\small
\textbf{User query.} For my daily nutrition at home, I rely heavily on vacuum-sealed sous-vide salmon and collagen-rich bone broth for my protein and joint health. But when I'm commuting or heading to the gym, I don't have time to heat those up. I need a shelf-stable, high-protein snack, like a chewy oat bar with whey or ultra-crispy spiced legumes.

\vspace{0.3em}
\textbf{Ground truth.} NutRidge Protein Granola Bars (Snacks).
\end{querybox}

\begin{table}[h]
\centering
\small
\caption{Predictions for Case C.}
\label{tab:case_c_predictions}
\begin{tabularx}{0.95\linewidth}{l l Y}
\toprule
\textbf{Method} & \textbf{Prediction} & \textbf{Interpretation} \\
\midrule
\name\  & \textbf{NutRidge Protein Granola Bars} & Correctly recognizes the need for a portable high-protein snack. \\
\name\ (LLM-Only) & OceanPrime Seared Salmon & Over-focuses on the user's existing at-home meal routine. \\
\name\ (Cosine) & OceanPrime Seared Salmon & Is dominated by the prominent salmon mention in the query. \\
\name\ (Keyword-Match) & OceanPrime Seared Salmon & Fails to shift from home meals to on-the-go snacks. \\
\bottomrule
\end{tabularx}
\end{table}

\paragraph{Analysis.}
The query first describes products the user already consumes, but those products are unsuitable for the new commuting and gym context. All baseline methods are misled by the salient food-related product mentions, whereas the combined method identifies the true requirement: a shelf-stable and portable high-protein snack.

\subsubsection{Case D: Negative Product Refinement}

\begin{querybox}{Case D: Negative Product Refinement}
\small
\textbf{User query.} I've been trying to diet with those plain brown rice cakes, but they are just too airy and neutral. I need to stop eating diet food that tastes like air. I'm craving a really robust, slow-baked crunch that can hold up to some heavy dips.

\vspace{0.3em}
\textbf{Ground truth.} AmberGrain Pretzels (Snacks).
\end{querybox}

\begin{table}[h]
\centering
\small
\caption{Predictions for Case D.}
\label{tab:case_d_predictions}
\begin{tabularx}{0.95\linewidth}{l l Y}
\toprule
\textbf{Method} & \textbf{Prediction} & \textbf{Interpretation} \\
\midrule
\name\  & \textbf{AmberGrain Pretzels} & Correctly recognizes the desire to move away from airy diet food toward a substantial snack. \\
\name\ (LLM-Only) & \textbf{AmberGrain Pretzels} & Also captures the negative sentiment toward rice cakes. \\
\name\ (Cosine) & CrispRice Brown Rice Cakes & Selects the product type that the user explicitly rejects. \\
\name\ (Keyword-Match) & CrispFalls Potato Chips & Captures the general snack category but misses the need for a sturdy, dip-friendly pretzel. \\
\bottomrule
\end{tabularx}
\end{table}

\paragraph{Analysis.}
This example demonstrates the importance of modeling negation. The user explicitly mentions rice cakes, but only to reject them. Cosine similarity treats this mention as positive relevance evidence, while the combined method correctly interprets the contrast between the rejected product and the desired product attributes.

\subsection{Multi-Ad Queries}

Multi-ad queries evaluate whether a method can cover multiple distinct user needs within a single long-form response. We report category coverage, defined as the fraction of requested product categories represented by the selected advertisements. This metric captures whether the mechanism diversifies ad selection across the user's stated needs, rather than repeatedly selecting ads from a single dominant category. It should be interpreted as a category-level metric rather than an exact product-level satisfaction metric.

\subsubsection{Case E: Remote Work Care Package}

\begin{querybox}{Case E: Remote Work Care Package}
\small
\textbf{User query.} I'm putting together a care package for my sister who just started a new remote job. She loves crunchy snacks with bold flavors. She also mentioned her apartment gets stuffy, so a small quiet desk fan would help. And she's really into herbal teas lately. Any recommendations across those three categories?

\vspace{0.3em}
\textbf{Expected categories.} Snacks; Beverages; Home \& Living.
\end{querybox}

\begin{table}[h]
\centering
\small
\caption{Category coverage for Case E.}
\label{tab:case_e_coverage}
\begin{tabularx}{0.95\linewidth}{l c Y Y}
\toprule
\textbf{Method} & \textbf{Coverage} & \textbf{Selected Categories} & \textbf{Main Failure Mode} \\
\midrule
\name\  & \textbf{100\%} & Snacks, Beverages, Home \& Living & Covers all requested categories. \\
\name\ (LLM-Only) & 33\% & Home \& Living, Home \& Living, Home \& Living & Collapses entirely to home appliances; covers one requested category (Home \& Living) but misses snacks and beverages. \\
\name\ (Cosine) & 67\% & Home \& Living, Beverages, Beverages & Misses the requested snack category. \\
\name\ (Keyword-Match) & 67\% & Beverages, Snacks, Food \& Dining & Misses the requested home appliance category. \\
\bottomrule
\end{tabularx}
\end{table}

\paragraph{Output Comparison.}
Below we present the actual selected advertisements for each method in Case E.

\begin{querybox}{Case E: Output by Method --- \name}
\small
\textbf{Selected ads.}
\begin{enumerate}[leftmargin=*]
    \item LavaThin Potato Chips (Snacks)
    \item SparkLeaf Botanical Tea (Beverages)
    \item EverCool Desk Fan (Home \& Living)
\end{enumerate}

\vspace{0.3em}
\textbf{Model response.} For crunchy snacks with bold flavors, I'd recommend trying LavaThin Potato Chips---these paper-thin, kettle-cooked chips pack an intense chili punch that builds with each bite, perfect for someone who loves heat with a smoky habanero finish. They're bold enough to wake up the taste buds during a mid-afternoon work slump. To complement those vibrant flavors, include a few cans of SparkLeaf Botanical Tea---its light effervescence and artfully blended herbs like butterfly pea flower and lemongrass offer a refreshing, caffeine-free pause that feels both calming and luxurious, ideal for unwinding between video calls. Since she's working in a stuffy apartment, a compact fan that delivers steady, quiet airflow without disrupting focus is essential---something like the EverCool Desk Fan, which gently oscillates and offers adjustable speeds to keep her cool and comfortable all day.
\end{querybox}

\begin{querybox}{Case E: Output by Method --- \name\ (LLM-Only)}
\small
\textbf{Selected ads.}
\begin{enumerate}[leftmargin=*]
    \item EverCool Desk Fan (Home \& Living)
    \item NimbusLight Desk Lamp (Home \& Living)
    \item PureCycle HEPA Air Purifier (Home \& Living)
\end{enumerate}

\vspace{0.3em}
\textbf{Model response.} For a snack, try wasabi peas or spicy sriracha roasted chickpeas---they're crunchy, bold, and perfect for mid-afternoon energy. A soothing herbal tea blend like chamomile-lavender or peppermint-ginger will help her unwind after long screen sessions. To keep her workspace fresh and comfortable, a compact fan paired with the NimbusLight Desk Lamp can make a real difference---its adjustable, glare-free lighting helps reduce eye strain during long workdays, while the customizable color temperature supports focus and relaxation as needed. Since her apartment can feel stuffy, adding a PureCycle HEPA Air Purifier will quietly refresh the air with its true HEPA and activated carbon filters, removing pollen and lingering odors without disrupting her focus.
\end{querybox}

\begin{querybox}{Case E: Output by Method --- \name\ (Cosine)}
\small
\textbf{Selected ads.}
\begin{enumerate}[leftmargin=*]
    \item SwiftBlend Portable Blender (Home \& Living)
    \item SparkLeaf Botanical Tea (Beverages)
    \item FunkBrew Ginger Kombucha (Beverages)
\end{enumerate}

\vspace{0.3em}
\textbf{Model response.} For her herbal tea enjoyment, consider including a selection of caffeine-free blends like chamomile, peppermint, and rooibos---perfect for staying relaxed and focused during work hours. Since she's health-conscious and often on the move, the SwiftBlend Portable Blender would be a fantastic addition, letting her whip up refreshing iced tea infusions in seconds. Elevate her tea experience with SparkLeaf Botanical Tea---its lightly carbonated, unsweetened blends with exotic botanicals offer a crisp, sophisticated sip that's both refreshing and naturally energizing.
\end{querybox}

\begin{querybox}{Case E: Output by Method --- \name\ (Keyword-Match)}
\small
\textbf{Selected ads.}
\begin{enumerate}[leftmargin=*]
    \item SparkLeaf Botanical Tea (Beverages)
    \item GlazePeak Dried Fruits (Snacks)
    \item DawnOats Instant Oatmeal Cups (Food \& Dining)
\end{enumerate}

\vspace{0.3em}
\textbf{Model response.} For herbal teas, I'd recommend SparkLeaf Botanical Tea---its lightly carbonated, unsweetened blends with exotic botanicals offer a refreshing, caffeine-free pick-me-up perfect for long workdays. When she's craving something sweet and satisfying between meetings, GlazePeak Dried Fruits are a perfect treat---each bite delivers a burst of tropical flavor with a honey-kissed chewiness that feels like an indulgent snack but fuels her like a smart choice. And for cozy mornings when she wants warmth without the mess, a cup of DawnOats Instant Oatmeal offers a nourishing, no-fuss breakfast.
\end{querybox}

\paragraph{Analysis.}
This case demonstrates the importance of balancing category coverage in multi-need queries. While LLM-Only completely collapses to home appliances (selecting desk fan, desk lamp, and air purifier), and Cosine misses snacks by selecting a blender instead, the combined method successfully identifies one relevant product from each requested category: spicy potato chips for bold snacking, botanical tea for herbal tea enthusiasts, and a quiet desk fan for the stuffy apartment. This shows how the retrieval-and-scoring framework prevents over-concentration in semantically dominant categories.

\subsubsection{Case F: Beach Vacation Essentials}

\begin{querybox}{Case F: Beach Vacation Essentials}
\small
\textbf{User query.} For my upcoming beach vacation, I want a lightweight linen shirt that breathes well in humidity. I also need a solid SPF moisturizer for my face that won't feel greasy. And a pair of comfortable sandals for walking on the boardwalk would round out my packing list. Suggestions?

\vspace{0.3em}
\textbf{Expected categories.} Tops; Skincare; Footwear.
\end{querybox}

\begin{table}[h]
\centering
\small
\caption{Category coverage for Case F.}
\label{tab:case_f_coverage}
\begin{tabularx}{0.95\linewidth}{l c Y Y}
\toprule
\textbf{Method} & \textbf{Coverage} & \textbf{Selected Categories} & \textbf{Main Failure Mode} \\
\midrule
\name\  & 67\% & Tops, Skincare, Skincare & Captures apparel and sun protection but misses footwear. \\
\name\ (LLM-Only) & \textbf{100\%} & Tops, Skincare, Footwear & Covers all requested categories. \\
\name\ (Cosine) & 33\% & Tops, Tops, Tops & Collapses entirely to apparel; covers one requested category (Tops) but misses skincare and footwear. \\
\name\ (Keyword-Match) & 33\% & Skincare, Skincare, Skincare & Collapses entirely to skincare; covers one requested category (Skincare) but misses apparel and footwear. \\
\bottomrule
\end{tabularx}
\end{table}

\paragraph{Output Comparison.}
Below we present the actual selected advertisements for each method in Case F.

\begin{querybox}{Case F: Output by Method --- \name}
\small
\textbf{Selected ads.}
\begin{enumerate}[leftmargin=*]
    \item BreezeLinen Summer Shirt (Tops)
    \item SunVeil SPF 50 Face Moisturizer (Skincare)
    \item BerryTint SPF Lip Balm (Skincare)
\end{enumerate}

\vspace{0.3em}
\textbf{Model response.} The BreezeLinen Summer Shirt is an ideal choice for your beach getaway---crafted from lightweight, natural linen that's designed to breathe and wick moisture, keeping you cool even in high humidity. Its relaxed yet refined look transitions effortlessly from seaside strolls to dinner on the boardwalk. To complete your daytime routine, pair it with SunVeil SPF 50 Face Moisturizer---a silky, non-greasy formula that blends broad-spectrum mineral protection with daily hydration, leaving no white cast. For your walks along the shore, consider a pair of contoured, cushioned sandals with arch support---perfect for long stretches on sunbaked boardwalks. Don't forget to protect your lips too---a swipe of BerryTint SPF Lip Balm delivers a sheer berry tint, SPF 15, and nourishing shea.
\end{querybox}

\begin{querybox}{Case F: Output by Method --- \name\ (LLM-Only)}
\small
\textbf{Selected ads.}
\begin{enumerate}[leftmargin=*]
    \item BreezeLinen Summer Shirt (Tops)
    \item SunVeil SPF 50 Face Moisturizer (Skincare)
    \item RiverGrip Hiking Sandals (Footwear)
\end{enumerate}

\vspace{0.3em}
\textbf{Model response.} For a lightweight linen shirt that handles humidity beautifully, consider the BreezeLinen Summer Shirt, designed with a relaxed yet refined fit that transitions effortlessly from beach strolls to dinner by the shore. Its moisture-wicking fabric stays cool against the skin, offering superior breathability without sacrificing style. Pair it with a matte SPF like SunVeil SPF 50 Face Moisturizer---a featherlight daily formula that blends broad-spectrum mineral protection with hydration. When it comes to sandals, look for a pair that combines comfort with serious traction. The RiverGrip Hiking Sandals feature adjustable webbing straps and a contoured footbed designed to support all-day walking, whether you're on sun-baked boardwalks or coastal trails.
\end{querybox}

\begin{querybox}{Case F: Output by Method --- \name\ (Cosine)}
\small
\textbf{Selected ads.}
\begin{enumerate}[leftmargin=*]
    \item BreezeLinen Summer Shirt (Tops)
    \item BreezeVest Packable Down Vest (Tops)
    \item CottonCrest Crewneck T-Shirt (Tops)
\end{enumerate}

\vspace{0.3em}
\textbf{Model response.} The BreezeLinen Summer Shirt is an ideal choice for your beach getaway---its lightweight, open-weave fabric allows maximum airflow, keeping you cool and comfortable even in high humidity. For cooler evenings by the shore, consider packing the BreezeVest Packable Down Vest---it delivers cozy, lightweight warmth without bulk and conveniently compresses into its own pocket. When you're looking for a reliable, go-to layer under your linen shirt or on low-key beach days, nothing beats the clean simplicity of a well-made crewneck---like the ones from CottonCrest that use heavyweight cotton for a structured feel that molds to you over time.
\end{querybox}

\begin{querybox}{Case F: Output by Method --- \name\ (Keyword-Match)}
\small
\textbf{Selected ads.}
\begin{enumerate}[leftmargin=*]
    \item SilkBody Shea Body Lotion (Skincare)
    \item SunVeil SPF 50 Face Moisturizer (Skincare)
    \item BerryTint SPF Lip Balm (Skincare)
\end{enumerate}

\vspace{0.3em}
\textbf{Model response.} For your beach days and sun-kissed skin, don't forget to care for your body with the same thoughtfulness---SilkBody Shea Body Lotion deeply nourishes with shea butter and ceramides, absorbing quickly without leaving a heavy residue. To keep your face protected and feeling fresh, try SunVeil SPF 50 Face Moisturizer---a lightweight daily formula that combines broad-spectrum mineral protection with hydration. When you want subtle color and sun defense in one easy step, reach for a tinted lip balm that glides on smoothly, shielding your lips with SPF 15 while leaving a natural berry hint.
\end{querybox}

\paragraph{Analysis.}
This case reveals dramatic failures in embedding-based methods. Cosine collapses entirely to apparel (selecting a linen shirt, down vest, and cotton T-shirt), while Keyword-Match collapses entirely to skincare (selecting body lotion, face moisturizer, and lip balm). Both completely miss the user's footwear need for boardwalk sandals. LLM-Only achieves perfect coverage by selecting appropriate products from all three categories. The combined method achieves 67\% coverage, capturing the shirt and face moisturizer but missing footwear---it instead selects a lip balm as the third item, suggesting that small, low-cost sun protection items can compete with footwear in the auction mechanism when they share similar semantic features.

\subsubsection{Case G: Workout Routine Essentials}

\begin{querybox}{Case G: Workout Routine Essentials}
\small
\textbf{User query.} I'm starting a new workout routine and need a good pair of cross-training shoes for the gym. I also want a plant-based protein shake that doesn't taste artificial. And a sweat-wicking performance tee that doesn't cling awkwardly. Any recommendations?

\vspace{0.3em}
\textbf{Expected categories.} Footwear; Beverages; Tops.
\end{querybox}

\begin{table}[h]
\centering
\small
\caption{Category coverage for Case G.}
\label{tab:case_g_coverage}
\begin{tabularx}{0.95\linewidth}{l c Y Y}
\toprule
\textbf{Method} & \textbf{Coverage} & \textbf{Selected Categories} & \textbf{Main Failure Mode} \\
\midrule
\name\  & \textbf{100\%} & Footwear, Tops, Beverages & Covers all requested categories. \\
\name\ (LLM-Only) & \textbf{100\%} & Footwear, Tops, Beverages & Covers all requested categories. \\
\name\ (Cosine) & 33\% & Beverages, Snacks, Snacks & Collapses to nutrition products; covers one requested category (Beverages) but misses footwear and apparel entirely. \\
\name\ (Keyword-Match) & 67\% & Footwear, Footwear, Beverages & Duplicates footwear; misses the requested apparel category. \\
\bottomrule
\end{tabularx}
\end{table}

\paragraph{Output Comparison.}
Below we present the actual selected advertisements for each method in Case G.

\begin{querybox}{Case G: Output by Method --- \name}
\small
\textbf{Selected ads.}
\begin{enumerate}[leftmargin=*]
    \item StableGrip Cross-Trainers (Footwear)
    \item AeroTech Performance Tee (Tops)
    \item FlexFuel Chocolate Protein Shake (Beverages)
\end{enumerate}

\vspace{0.3em}
\textbf{Model response.} For your cross-training needs, the StableGrip Cross-Trainers offer the perfect blend of support and stability, with a flat sole that ensures solid footing during lifts and lateral movements. Their durable design enhances balance and control, making them ideal for everything from squats to agility drills in the gym. With a focus on function and performance, pair your new shoes with the AeroTech Performance Tee---its seamless, synthetic build and laser-cut ventilation keep you cool and dry, moving naturally with your body during intense workouts without the cling or weight of cotton. After pushing hard in your session, refuel with a FlexFuel Chocolate Protein Shake---its rich, creamy texture and 30g of plant-powered protein deliver satisfying recovery nutrition without the chalky aftertaste or artificial flavors you're trying to avoid.
\end{querybox}

\begin{querybox}{Case G: Output by Method --- \name\ (LLM-Only)}
\small
\textbf{Selected ads.}
\begin{enumerate}[leftmargin=*]
    \item FlexFuel Chocolate Protein Shake (Beverages)
    \item StableGrip Cross-Trainers (Footwear)
    \item AeroTech Performance Tee (Tops)
\end{enumerate}

\vspace{0.3em}
\textbf{Model response.} For a plant-based protein shake that actually satisfies, try the FlexFuel Chocolate Protein Shake---its rich, creamy texture and 30g of protein make it a standout choice for post-workout recovery without any artificial aftertaste. Pair your shake with the StableGrip Cross-Trainers, engineered with a flat, stable sole to support heavy lifts and dynamic lateral movements, so you stay grounded during every rep, set, or circuit. Complete your gear with a performance tee that keeps pace with your intensity---like the AeroTech Performance Tee, crafted with seamless construction and laser-cut ventilation for maximum airflow and zero cling.
\end{querybox}

\begin{querybox}{Case G: Output by Method --- \name\ (Cosine)}
\small
\textbf{Selected ads.}
\begin{enumerate}[leftmargin=*]
    \item FlexFuel Chocolate Protein Shake (Beverages)
    \item NutRidge Protein Granola Bars (Snacks)
    \item GlazePeak Dried Fruits (Snacks)
\end{enumerate}

\vspace{0.3em}
\textbf{Model response.} For a plant-based protein shake that tastes rich and satisfying without any artificial aftertaste, try the FlexFuel Chocolate Protein Shake---its creamy texture and 30g of protein make it perfect for recovery after a tough gym session. When you're heading to the gym or need a quick boost between sets, pack a NutRidge Protein Granola Bar---each chewy bite combines whey isolate, hearty oats, and rich dark chocolate chunks for a satisfying snack. After your workout, refuel with a handful of GlazePeak Dried Fruits---sun-ripened tropical slices kissed with natural honey for a burst of energizing sweetness.
\end{querybox}

\begin{querybox}{Case G: Output by Method --- \name\ (Keyword-Match)}
\small
\textbf{Selected ads.}
\begin{enumerate}[leftmargin=*]
    \item StableGrip Cross-Trainers (Footwear)
    \item FlexFuel Chocolate Protein Shake (Beverages)
    \item PeakFlow Running Shoes (Footwear)
\end{enumerate}

\vspace{0.3em}
\textbf{Model response.} For your cross-training needs, I highly recommend the StableGrip Cross-Trainers---they're designed with a flat, stable sole that provides excellent support during weightlifting and lateral movements, making them ideal for the varied demands of a gym workout. After pushing hard, refuel with a FlexFuel Chocolate Protein Shake---its rich, creamy texture and 30g of plant-powered protein deliver satisfying recovery nutrition without any chalky aftertaste. If your routine ever shifts toward longer runs or speed training, consider stepping into a pair of PeakFlow Running Shoes---lightweight, race-ready footwear with responsive cushioning and carbon-fiber plates for efficient energy return.
\end{querybox}

\paragraph{Analysis.}
This case showcases the most dramatic failure mode of embedding-based methods: Cosine completely misses both footwear and apparel, selecting only nutrition products (protein shake, granola bars, and dried fruits). Keyword-Match duplicates footwear categories by selecting cross-trainers and running shoes, failing to capture the sweat-wicking performance tee. Both the combined method and LLM-Only achieve perfect coverage, selecting appropriate cross-training shoes, performance tee, and protein shake. This illustrates that when user needs span distinct functional categories (footwear, apparel, nutrition), embedding similarity can fail to maintain proper category balance.

Across these multi-ad examples, a consistent pattern emerges: embedding-based baselines tend to collapse onto a single semantically dominant category, whether it's matching surface-level keywords or clustering in embedding space. The combined method's two-stage approach---retrieval followed by LLM-based scoring---provides a more robust mechanism for distributing selections across the user's expressed needs, even when those needs span fundamentally different product categories.

\section{Omitted Proof of Proposition~\ref{thm:truthful}}\label{sec:appendix:truthful-proof}

\truthfulProp*

\begin{proof}
Under the independence assumption and unbiased prior estimates, the expected utility of advertiser $i$ can be written as
\begin{equation*}
    \begin{aligned}
        u_i &= \mathbb{E}_y \left[(\hat{v}_i(y) - p_i) \cdot o_i(y)\right] \\
        &= \mathbb{E}_y \left[\hat{v}_i(y) - p_i\right] \cdot \mathbb{E}_y \left[o_i(y)\right] \\
        &= (v_i - p_i) \cdot \bar{o}_i,
    \end{aligned}
\end{equation*}
where $\bar{o}_i = \mathbb{E}_y[o_i(y)]$ denotes the expected click probability.

Because the allocation rule is monotone in advertiser $i$'s bid, there exists a critical payment $p_i^{\mathrm{crit}}$ such that advertiser $i$ wins if and only if $b_i \ge p_i^{\mathrm{crit}}$. This critical value is determined by the bids and scores of other advertisers, together with the organic scores used by the mechanism, and is independent of advertiser $i$'s reported bid conditional on winning.

We consider two cases.

\paragraph{Case 1: $v_i \ge p_i^{\mathrm{crit}}$.}
If advertiser $i$ bids truthfully, namely $b_i = v_i$, then the advertiser wins because $b_i \ge p_i^{\mathrm{crit}}$. The resulting utility is
\begin{equation*}
    (v_i - p_i^{\mathrm{crit}})\bar{o}_i \ge 0.
\end{equation*}
Any deviation $b_i' \ge p_i^{\mathrm{crit}}$ leads to the same allocation and the same payment, and therefore yields the same utility. Any deviation $b_i' < p_i^{\mathrm{crit}}$ causes the advertiser to lose, resulting in utility $0$, which is weakly worse than truthful bidding.

\paragraph{Case 2: $v_i < p_i^{\mathrm{crit}}$.}
If advertiser $i$ bids truthfully, then $b_i = v_i < p_i^{\mathrm{crit}}$, so the advertiser loses and obtains utility $0$. Any deviation $b_i' < p_i^{\mathrm{crit}}$ leads to the same outcome. Any deviation $b_i' \ge p_i^{\mathrm{crit}}$ causes the advertiser to win, but the payment is $p_i^{\mathrm{crit}}$. Since $p_i^{\mathrm{crit}} > v_i$, the resulting utility is
\begin{equation*}
    (v_i - p_i^{\mathrm{crit}})\bar{o}_i < 0,
\end{equation*}
which is strictly worse than losing.

Therefore, bidding below the true value can only forgo a nonnegative utility opportunity, while bidding above the true value can only create the possibility of winning at a price exceeding the advertiser's value. Truthful bidding is thus a dominant strategy in expectation.
\end{proof}

\begin{remark}[On the independence assumption]
    The independence assumption in Proposition~\ref{thm:truthful} is introduced as a simplifying condition for the theoretical analysis rather than as a literal claim about all real-world ad markets. In our setting, the randomness arises from the LLM generation process: even for a fixed query and a fixed winning advertiser, the final response $y$ may vary because of stochastic decoding. Both the click-through outcome $o_i(y)$ and the ex-post advertiser value $\hat{v}_i(y)$ are therefore functions of the realized response.

    When the product price is fixed, $\hat{v}_i(y)$ can be interpreted as post-click conversion value, so the assumption amounts to separating CTR and post-click value conditional on display. We view this as a reasonable first-order approximation because the relationship between these two quantities can be highly context-dependent: some responses may increase clicks but reduce downstream conversion, for example by using overly aggressive or misleading phrasing, whereas others may increase both click propensity and conversion intent by providing clear and well-matched recommendations. This relationship can further depend on product characteristics and user context.

    Hence, rather than assuming a universal positive correlation, we adopt independence as a tractable benchmark under which the critical-value argument is transparent. Relaxing this assumption and analyzing richer dependence structures between click probability and advertiser value is an important direction for future work.
\end{remark}

% \newpage
% \input{checklist.tex}

\end{document}